\newtheorem{theorem}{Theorem}%[section]
\newtheorem{example}[theorem]{Example}%[section]
\newtheorem{lemma}[theorem]{Lemma}%[section]
\numberwithin{equation}{section}
\def\span{\operatorname{span}}
\renewcommand{\H}{\mathcal{H}}
\def\lt{\operatorname{lt}}
\newtheorem{claim}{Claim}
\newtheorem{case}{Case}
\newcommand{\C}{\ensuremath{\mathbb C}\xspace}
\renewcommand{\a}{\ensuremath{\alpha}}
\renewcommand{\b}{\ensuremath{\beta}}
\newcommand{\Z}{\ensuremath{\mathbb{Z}}\xspace}
\newcommand{\N}{\ensuremath{\mathbb{N}}\xspace}
\renewcommand{\phi}{\varphi}
\begin{document}
\title[Simple modules over $\mathcal{H}(f)$]{Finite-dimensional simple modules over  generalized Heisenberg algebras}
\author{Rencai L{\"u} and Kaiming Zhao}
\date{Jan.3, 2014}
\maketitle

\begin{abstract} Generalized Heisenberg algebras $\H(f)$ for any polynomial $f(h)\in\C[h]$
have been used to explain various physical systems and many physical
phenomena for the last 20 years. In this paper, we first obtain the
center of $\H(f)$, and the necessary and sufficient conditions on
$f$ for two $\H(f)$ to be isomorphic. Then we determine all finite
dimensional simple modules over $\H(f)$ for any polynomial
$f(h)\in\C[h]$. If $f=wh+c$ for any $c\in \C$ and $n$-th ($n>1$)
primitive root $w$ of unity we actually obtain a complete
classification of all irreducible modules over $\mathcal{H}(f)$. For
 many $f\in\C[h]$, we also prove that,  for any $n\in\N$, $\mathcal{H}(f)$ has
infinitely many ideals $I_n$ such that $\mathcal{H}(f)/I_n\cong
M_n(\C)$, the matrix algebra.
\end{abstract}

\vskip 5pt \noindent {\em Keywords:} generalized Heisenberg algebra,
isomorphism, weight module, simple module

%\vskip 5pt \noindent {\em 2000  Math. Subj. Class.:} 17B10, 17B20,
%17B65, 17B66, 17B68

\vskip 10pt

\section{Introduction}
Algebraic methods have long been applied to  solutions
of a large number of physical systems and many physical phenomena.
The most classical example of an algebraically solved system
is the harmonic oscillator, whose underlying algebra is the
Heisenberg algebra, generated by annihilation and creation operators.
Recent examples are the   concept of deformed Heisenberg algebras \cite{B, M}, that have
been used in many areas, as nuclear physics \cite{BD}, condensed matter
\cite{MRW}, atomic physics \cite{ADF}, etc.

Guided  by the wide range of physical applicability of the Heisenberg algebra there
have been a lot of successful efforts in the last 30 years to analyze possible physical relevance of q-oscillators or
deformed Heisenberg algebras \cite{AEGPL, GPLL, P}. The expected physical properties of toy systems described
by these generalized Heisenberg algebras were analyzed and indications on how to solve an
old puzzle in physics were obtained \cite{MRW}.

Sometime later,  logistic algebras, that are a generalization of the
Heisenberg algebra  were introduced. These algebras have finite and
infinite dimensional representations associated with the cycles of
the logistic map and infinite-dimensional representations related to
the chaotic band \cite{CR1}. A quantum solid Hamiltonian whose
collective modes of vibration are described by oscillators
satisfying the logistic algebra was constructed and the
thermodynamic properties of this model in the two-cycle and in a
specific chaotic region of the logistic map were analyzed. In the
chaotic band this model shows  how a quantum system can present a
nonstandard quantum behaviour \cite{CR1}.

In 2001, as a generalization of logistic algebras, the concept of
generalized Heisenberg algebras were formally introduced in
\cite{CR2}. Applications of this algebraic structure to different
systems have appeared in the literature in the last few years
\cite{BCR1, BCR2, BEH, CRRH,  CRRL, DCR, DOR, HCR}. In particular,
how catlike superpositions of generalized Heisenberg algebras
nonlinear coherent states behave under dissipative  decoherence was
studied in \cite{CRRL}; coherent states for power-law potentials
were constructed using generalized Heisenberg algebra in \cite{BEH};
the anharmonic spectrum of diatomic molecules was reproduced based
on a generalized Heisenberg algebra in \cite{DOR}. In \cite{CHR},
following the harmonic oscillator scheme, the authors presented a
realization of  generalized Heisenberg algebra for the free particle
in an infinite square-well potential. Now let us recall the
generalized Heisenberg algebras.

We will denote by $\mathbb{Z}$, $\mathbb{Z}_+$, $\N$, and $\mathbb{C}$
the sets of  all integers, nonnegative integers, positive integers,
 and complex numbers, respectively. For a Lie algebra
$L$ we denote by $U(L)$ the universal enveloping algebra of $L$.

\vskip 5pt For any  $f(h)\in\C[h]$, the {\it generalized Heisenberg
algebra} $\mathcal{H}(f)$   is the associative algebra over $\C$
generated by $x, y, h$ with definition relations:
$$hx = xf(h); \,\,\, yh = f(h)y; \,\,\,[y,x] = f(h)-h.$$
The original definition for $\mathcal{H}(f)$ was using any analytic function $f(h)$. In
 the present paper we only study generalized Heisenberg algebras $\mathcal{H}(f)$ for $f(h)\in\C[h]$.
It is easy to see that  $z = xy-h = yx-f(h)$ is a central element in $\mathcal{H}(f)$.

 The algebra $\mathcal{H}(f)=\C[x, h, y]$
is  commutative if and only if  $f(h) = h$. It is easy to see that
$\mathcal{H}(f)$ is not a Smith algebra \cite{S} whose isomorphism
was solved in \cite{BJ}, or a generalized Weyl algebra \cite{Ba}.
%Thus, from now on we assume that $f(h)\ne h-a$ for any $a\in \C$.

For convenience,  we define $f^{(1)}(h)=f(h),
f^{(2)}(h)=f(f^{(1)}(h))$ and $f^{(i+1)}(h)=f(f^{(i)}(h))$ for all
$i\in\N$.

A module $V$ over $\mathcal{H}(f)$ is called a {\it weight module} if $V=\oplus_{\lambda\in\C}V_\lambda$ where
$V_\lambda=\{v\in V\,\,|\,\,hv=\lambda v\}$.

Some special finite dimensional weight modules over $\mathcal{H}(f)$ were constructed in \cite{CR2}.

The present paper is organized as follows. In Sect.2, we  obtain the
center of $\H(f)$ (Theorem 4), and the necessary and sufficient
conditions on $f$ for two $\H(f)$ to be isomorphic (Theorem 5). In
Sect.3,  we determine all finite dimensional simple modules over
$\H(f)$ for any polynomial $f(h)\in\C[h]$ (Theorem 12). These
modules are all weight modules with weight multiplicity $1$. If
$f=wh+c$ for any $c\in \C$ and for any $n$-th ($n>1$) primitive root
$w$ of unity we have actually the complete classification of all
irreducible modules over $\mathcal{H}(f)$ (Example 13 (4)).

An interesting property of many generalized Heisenberg algebras
$\mathcal{H}(f)$ is that for any $n\in\N$, $\mathcal{H}(f)$ has
infinitely many ideals $I_n$  such that $\mathcal{H}(f)/I_n\cong
M_n(\C)$, the matrix algebra (Example 15).

\section{Isomorphism classes of generalized Heisenberg algebras}

We will first study properties of generalized Heisenberg algebras
$\mathcal{H}(f)$. Then we determine the center of $\mathcal{H}(f)$,
and isomorphism classes for all generalized Heisenberg algebras. The
following lemma is a crucial result for this paper.

\begin{lemma}\label{lemma-1} The set $\{x^ih^jy^k|i,j,k\in \Z_+\}$ is a basis of $\mathcal{H}(f)$.\end{lemma}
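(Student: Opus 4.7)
The plan is to prove the two assertions—spanning and linear independence—separately. Spanning is a routine rewriting argument, while linear independence, the substantive part, is best handled by constructing an explicit faithful representation of $\mathcal{H}(f)$ on a polynomial ring in three commuting variables.

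For spanning, I would first derive by induction from the three defining relations the useful identities
\begin{equation*}
h^j x^i = x^i f^{(i)}(h)^j, \qquad y^k h^j = f^{(k)}(h)^j y^k, \qquad y^k x = x y^k + \bigl(f^{(k)}(h)-h\bigr) y^{k-1}.
\end{equation*}
It then suffices, by induction on the length of a word in $x, h, y$, to verify that $x^i h^j y^k \cdot g$ lies in $\mathrm{span}\{x^{i'} h^{j'} y^{k'}\}$ for each generator $g \in \{x,h,y\}$, which is immediate from the three identities above (after multiplying out the resulting polynomials in $h$ in the middle).

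For linear independence, I would take $V = \C[X,H,Y]$, the commutative polynomial algebra in three indeterminates with its monomial basis $\{X^i H^j Y^k\}$, and define $\C$-linear operators on $V$ by
\begin{align*}
x \cdot (X^i H^j Y^k) &= X^{i+1} H^j Y^k, \\
h \cdot (X^i H^j Y^k) &= X^i\, f^{(i)}(H)\, H^j Y^k, \\
y \cdot (X^i H^j Y^k) &= X^i f(H)^j Y^{k+1} + X^{i-1} H^j \bigl(f^{(i)}(H)-H\bigr) Y^k,
\end{align*}
with the convention that the second summand in the last line vanishes when $i=0$. Verifying $hx = xf(h)$ and $yh = f(h)y$ reduces to the auxiliary identity $f(h) \cdot (X^i H^j Y^k) = X^i f^{(i+1)}(H) H^j Y^k$, a consequence of the formula for $h$ by $\C$-linearity. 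Verifying $yx - xy = f(h) - h$ is the main computation: on $X^i H^j Y^k$ the $Y^{k+1}$-contributions from both sides cancel, leaving the telescoping difference $X^i H^j \bigl(f^{(i+1)}(H) - f^{(i)}(H)\bigr) Y^k$, which is precisely $(f(h)-h) \cdot (X^i H^j Y^k)$. Once the action is in place, a short induction yields $x^i h^j y^k \cdot 1 = X^i H^j Y^k$, and since these are a basis of $V$, the elements $\{x^i h^j y^k\}$ are linearly independent in $\mathcal{H}(f)$.

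The main obstacle will be the $[y,x] = f(h)-h$ verification, in particular the boundary case $i = 0$ where the cross term in the formula for $y$ disappears; making the telescope close there forces careful use of the convention $f^{(0)} = \mathrm{id}$.
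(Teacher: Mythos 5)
Your proposal is correct and follows essentially the same strategy as the paper: a rewriting induction for spanning, and linear independence via the same faithful action on $\C[X,H,Y]$ (your closed formula for the $y$-action, $y\cdot(X^iH^jY^k)=X^if(H)^jY^{k+1}+X^{i-1}H^j(f^{(i)}(H)-H)Y^k$, is exactly what the paper's recursive definition $y(X^iH^jY^k)=(xy+f(h)-h)X^{i-1}H^jY^k$ unwinds to by telescoping). The only differences are organizational: you multiply by one generator at a time instead of by a full monomial $x^{i_2}h^{j_2}y^{k_2}$, and you make the $y$-action explicit rather than recursive; both are sound.
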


\begin{proof} We will first show that $S=\{x^ih^jy^k|i,j,k\in \Z_+\}$ spans $\mathcal{H}(f)$.
It suffices to show that  the product $x^{i_1}h^{j_1}y^{k_1}x^{i_2}h^{j_2}y^{k_2}$ is a linear combination of elements in $S$
for any $i_1, j_1, k_1, i_2, j_2, k_2\in\Z_+$. We do this by induction on $k_1$.

If $k_1=0$, we see that
$$x^{i_1}h^{j_1}y^{k_1}x^{i_2}h^{j_2}y^{k_2}=x^{i_1+i_2}(f^{(i_2)}(h))^{j_1}
h^{j_2}y^{k_2}\in\span(S).$$ Similarly, if $i_2=0$,  then
$x^{i_1}h^{j_1}y^{k_1}x^{i_2}h^{j_2}y^{k_2}\in\span(S).$ By
induction we suppose
$x^{i_1}h^{j_1}y^{k_1}x^{i_2}h^{j_2}y^{k_2}=x^{i_1+i_2}
(f^{(i_2)}(h))^{j_1} h^{j_2}y^{k_2}\in\span(S)$ for a fixed $k_1
\in\Z_+$, and for all $i_1, j_1, i_2, j_2, k_2\in\Z_+$.

If $i_2=0$, we knew that $$x^{i_1}h^{j_1}y^{k_1+1}x^{i_2}h^{j_2}y^{k_2}\in\span(S).$$
Now suppose that $i_2>0$, and we have $$x^{i_1}h^{j_1}y^{k_1+1}x^{i_2}h^{j_2}y^{k_2}=x^{i_1}h^{j_1}y^{k_1}yxx^{i_2-1}h^{j_2}y^{k_2}$$
$$=x^{i_1}h^{j_1}y^{k_1}(xy+f(h))x^{i_2-1}h^{j_2}y^{k_2}$$
$$\equiv x^{i_1}h^{j_1}y^{k_1}xyx^{i_2-1}h^{j_2}y^{k_2}\mod \span(S)$$
$$\equiv ...\equiv x^{i_1}h^{j_1}y^{k_1} x^{i_2}yh^{j_2}y^{k_2}\mod \span(S)$$
$$\equiv x^{i_1}h^{j_1}y^{k_1} x^{i_2}(f(h))^{j_2}y^{k_2+1}\mod \span(S)$$
$$\equiv 0\mod \span(S).$$
Thus $ \span(S)=\mathcal{H}$.

Next we prove that $S$ is linearly independent. Let $V$ be the polynomial algebra $\C[X, H, Y]$.
Let us define the action of $\H$ on $V$ as follows:
$$x(X^iH^jY^k)=X^{i+1}H^jY^k,$$
$$h(X^iH^jY^k)=X^if^{(i)}(H)H^jY^k;$$
and define the action of $y$ by induction on $i$ as follows:
$$y(H^jY^k)=f(H)^jY^{k+1},\,\,\,$$
%$$ y(XH^jY^k)=(xy+f(h)-h)H^jY^k=Xf(H)^jY^{k+1}+(f(h)-h)H^jY^k),$$
$$ y(X^iH^jY^k)=(xy+f(h)-h)X^{i-1}H^jY^k, \forall i\in\N.$$
Now we show that the above action of $\H$ on $V$ makes $V$ into a
module over the associative algebra $\H$. It is straightforward to
verify that, for all $i,j,k\in\Z_+$,
$$hx(X^iH^jY^k)=xf(h)(X^{i}H^jY^k),\,\,\, $$ $$(yx-xy)(X^iH^jY^k)=(f(h)-h)(X^{i}H^jY^k),$$
$$yh(H^jY^k)=f(h)y(H^jY^k).$$
Using the above established formulas, now  we prove
$yh(X^iH^jY^k)=f(h)y(X^{i}H^jY^k)$ by induction on $i\in\Z_+$:
$$\aligned &yh(X^{i+1}H^jY^k)=y(X^{i+1}f^{(i+1)}(H)H^jY^k)\\
=&(xy+f(h)-h)(X^{i}f^{(i+1)}(H)H^jY^k)\\
=&xyf(h)(X^{i}H^jY^k)+(f(h)-h)(X^{i}f^{(i+1)}(H)H^jY^k)\\
=&xf^{(2)}(h)y(X^{i}H^jY^k)+(f(h)-h)(X^{i}f^{(i+1)}(H)H^jY^k)\\
=&f(h)xy(X^{i}H^jY^k)+f(h)(f(h)-h)(X^{i}H^jY^k)\\
=&f(h)(xy+f(h)-h)(X^{i}H^jY^k)\\
 =&f(h)y(X^{i+1}H^jY^k).\endaligned$$
Thus $V$ is an $\H$-module.

By considering the action of elements of $S$ on $1\in V$, we see
that $S$ is linearly independent. The lemma follows.
\end{proof}

Denote the lexicographical order on $\Z_+^2$ by $(i,j)>(i',j')$,
that means $i>i'$, or $i=i'$ and $j>j'$. For any $0\ne \a\in
\mathcal{H}(f)$, from Lemma \ref{lemma-1}, we may uniquely write
$$\a=x^ng(h)y^m+\sum_{(i,j)<(n,m)}x^ig_{ij}(h)y^j,$$ where $0\ne
g(h)\in \C[h]$ and $g_{ij}(h)\in \C[h]$ for all $i,j$. Denote the
degree and the leading term of $\a$ by $\deg \a=(n,m)$ and
lt$(\a)=x^ng(h)y^m$ respectively.

\begin{lemma}\label{lemma-2}  \begin{itemize}
\item[(1).]  If $f\not\in \C$, then $\deg(\a\b)=\deg(\a)+\deg(\b)$
for any nonzero $\a,\b\in \mathcal{H}(f)$.
\item[(2).]  The algebra $\mathcal{H}(f)$ has no zero-divisors if and only if $f\not\in
\C$.\end{itemize}
\end{lemma}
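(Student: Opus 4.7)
The plan is to derive (2) from (1) in one direction and to exhibit explicit zero-divisors in the other. The bulk of the work is thus (1), for which I will show that the leading term of a product $\alpha\beta$ equals the product $\text{lt}(\alpha)\,\text{lt}(\beta)$ reduced to the normal form of Lemma \ref{lemma-1}, and that the resulting leading polynomial coefficient is nonzero precisely because $f \notin \C$.

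First I will record three commutation identities that follow from the defining relations. Iterating $hx = xf(h)$ gives $p(h)\,x^n = x^n\, p(f^{(n)}(h))$ for any $p \in \C[h]$ and $n \geq 1$; iterating $yh = f(h)y$ gives $y^m\, p(h) = p(f^{(m)}(h))\, y^m$ for $m \geq 1$; and a double induction on $(m,n)$, starting from $yx = xy + (f(h) - h)$, yields $y^m x^n = x^n y^m + R_{m,n}$, where $R_{m,n}$ is a linear combination of basis monomials $x^i h^j y^k$ with $i < n$. The essential feature of the last identity is that each application of $yx = xy + (f(h) - h)$ trades a pair $yx$ for $xy$ plus a residual of strictly smaller $x$-degree.

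Using these rules I will write $\alpha = x^n g(h) y^m + \alpha'$ and $\beta = x^{n'} g'(h) y^{m'} + \beta'$ with $\alpha'$ and $\beta'$ of strictly lower lex-$(x,y)$-degree, and reduce $\text{lt}(\alpha)\,\text{lt}(\beta) = x^n g(h) y^m \cdot x^{n'} g'(h) y^{m'}$ to normal form, obtaining
\[
\text{lt}(\alpha)\,\text{lt}(\beta) = x^{n+n'}\, g(f^{(n')}(h))\, g'(f^{(m)}(h))\, y^{m+m'} + \rho,
\]
with $\deg\rho < (n+n', m+m')$. The three cross-products $\text{lt}(\alpha)\beta'$, $\alpha'\text{lt}(\beta)$, and $\alpha'\beta'$ will each reduce similarly to normal-form expressions of lex-$(x,y)$-degree strictly below $(n+n', m+m')$, so the entire leading term of $\alpha\beta$ is the displayed monomial. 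The crux, and the only place where $f \notin \C$ enters, is the nonvanishing of the coefficient $g(f^{(n')}(h))\, g'(f^{(m)}(h))$: since $\deg f \geq 1$ we have $\deg f^{(k)} = (\deg f)^k \geq 1$, and substitution of a polynomial of positive degree into any nonzero polynomial yields a nonzero polynomial. This gives (1), and the $\Leftarrow$ direction of (2) follows at once.

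For the converse of (2), if $f = c \in \C$ is constant, then $hx = cx$ and $yh = cy$ give $(h-c)x = 0$ and $y(h-c) = 0$; by Lemma \ref{lemma-1} the elements $x$, $y$, and $h - c$ are all nonzero, so $\H(f)$ has zero-divisors. The main obstacle is the bookkeeping in the leading-term calculation: one must simultaneously track the $x$-degree drops arising from $yx = xy + (f(h) - h)$ and the $h$-substitutions $h \mapsto f^{(k)}(h)$ arising from the other two commutation rules, and verify that these substitutions do not silently cancel, which is precisely where the non-constancy of $f$ is essential.
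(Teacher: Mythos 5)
Your proposal is correct and follows essentially the same route as the paper: the paper's proof of (1) simply asserts that $\lt(\a\b)=x^{n+s}g(f^{(s)}(h))g_1(f^{(m)}(h))y^{k+m}$ is "straightforward to verify," and your commutation identities and leading-coefficient nonvanishing argument (using that $f^{(k)}$ has positive degree when $f\notin\C$) are exactly the details behind that assertion. For (2) the paper likewise deduces the "if" part from (1) and exhibits the zero-divisor relation $(h-f)x=0$ for constant $f$, which is your $(h-c)x=0$.
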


\begin{proof} (1). Suppose that $f\not\in \C$. Say $\lt(\a)=x^ng(h)y^m$ and $\lt(\b)=x^s g_1(h)y^k$. Then it is straightforward to verify that $$\lt(\a\b)=x^{n+s}g(f^{(s)}(h))g_1(f^{(m)}(h))y^{k+m}.$$

(2).  The ``if part" follows from (1). The ``only if part" follows from the fact that $(h-f)x=0$ if $f\in \C$.\end{proof}

\begin{lemma}\label{lemma-3} \begin{itemize}
\item[(1).]  Let $f_1,f_2\in \C[h]$ such  that $f_2(ah+c)=af_1(h)+c$ for some $a\in \C^*, c\in \C$.
Then  $\mathcal{H}(f_1)\cong \mathcal{H}(f_2).$
\item[(2).] If $f\in \C$,
then the center $Z(\mathcal{H}(f))$ of $\mathcal{H}(f)$ is
$\C[z]$.\end{itemize}
\end{lemma}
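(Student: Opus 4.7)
The plan is to exhibit a mutually inverse pair of algebra maps between $\H(f_1)$ and $\H(f_2)$. Define $\phi\colon\H(f_2)\to\H(f_1)$ on generators by $x\mapsto x$, $h\mapsto ah+c$, $y\mapsto ay$, and verify the three defining relations of $\H(f_2)$ hold for these images; the hypothesis $f_2(ah+c)=af_1(h)+c$ is invoked once in each check, e.g.\ $(ah+c)x=a\cdot xf_1(h)+cx=x(af_1(h)+c)=xf_2(ah+c)$, and analogously $a\,y_1(ah_1+c)=a\,f_2(ah_1+c)y_1$ and $[ay_1,x_1]=a(f_1(h_1)-h_1)=f_2(ah_1+c)-(ah_1+c)$. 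Symmetrically define $\psi\colon\H(f_1)\to\H(f_2)$ on generators by $x\mapsto x$, $h\mapsto a^{-1}(h-c)$, $y\mapsto a^{-1}y$, and verify its relations using the equivalent form $f_1(a^{-1}(h-c))=a^{-1}(f_2(h)-c)$. Since $\phi\circ\psi$ and $\psi\circ\phi$ act as the identity on generators, $\phi$ is an algebra isomorphism.

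\textbf{Plan for part (2).} Let $c:=f\in\C$. By induction on $n$ using $yx=xy+c-h$ and the basis of Lemma~\ref{lemma-1}, one checks $z^n=x^ny^n+(\text{summands with }i=k<n)$, so $\{z^n\}_{n\geq 0}$ is linearly independent and $\C[z]\subseteq Z(\H(f))$ is a polynomial subalgebra. For the reverse inclusion, equip $\H(f)$ with the $\Z$-grading $\deg x=1$, $\deg y=-1$, $\deg h=0$; for $f\in\C$ all defining relations are homogeneous, so any central $\alpha$ decomposes as $\sum_d\alpha_d$ with each $\alpha_d$ central. It suffices to show $\alpha_d=0$ for $d\neq 0$ and $\alpha_0\in\C[z]$.

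\textbf{Case $d\neq 0$.} Treat $d>0$; the case $d<0$ follows via the anti-automorphism $x\mapsto y,\ y\mapsto x,\ h\mapsto 2c-h$ (which one checks preserves the three relations for $f=c$). Write $\alpha_d=\sum_{k\geq 0,\,j\geq 0}c_{d+k,j,k}\,x^{d+k}h^jy^k$. A direct computation using $hx^i=cx^i$ and $y^kh=cy^k$ (for $i,k\geq 1$) gives $[h,x^{d+k}h^jy^k]=0$ whenever $k\geq 1$, while $[h,x^dh^j]=x^d(c-h)h^j$, so imposing $[h,\alpha_d]=0$ together with $\C[h]$ being a domain forces $c_{d,j,0}=0$ for every $j$. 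Next, using the normal forms $yx^i=x^iy+cx^{i-1}-x^{i-1}h$ (for $i\geq 1$) and $yh^j=c^jy$, expand $[y,x^{d+k}h^jy^k]$ in the Lemma~\ref{lemma-1} basis and impose $[y,\alpha_d]=0$; the coefficients of $x^{d+k}h^jy^{k+1}$ for $j\geq 1$ force $c_{d+k,j,k}=0$, after which the coefficients of $x^{d+k-1}hy^k$ force the remaining $c_{d+k,0,k}=0$. Hence $\alpha_d=0$.

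\textbf{Case $d=0$.} Induct on $N:=\max\{k\geq 0\mid c_{k,j,k}\neq 0\text{ for some }j\}$. In the base case $N=0$, $\alpha_0=p(h)$, and $[x,p(h)]=xp(h)-p(c)x=0$ forces $p(h)$ constant by comparing coefficients of $xh^j$, so $\alpha_0\in\C\subseteq\C[z]$. For $N\geq 1$, the coefficient of $x^{N+1}h^jy^N$ in $[x,\alpha_0]=0$ gives the recursion $c_{N,j,N}=c\cdot c_{N+1,j,N+1}-c_{N+1,j-1,N+1}$, which vanishes for $j\geq 1$ by maximality of $N$. Since $z^N=x^Ny^N+(\text{summands with }i=k<N)$ with vanishing $x^Nh^jy^N$ coefficient for $j\geq 1$, the element $\alpha_0-c_{N,0,N}z^N$ is still central with strictly smaller maximum index, and the induction closes. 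The main obstacle is the careful basis bookkeeping in Case $d\neq 0$: once the normal-form identities for $hx^i$, $yh^j$, $yx^i$ and the $\Z$-grading are in hand, the linear constraints on $c_{d+k,j,k}$ must be unraveled in the correct order, first killing the $j\geq 1$ coefficients via $[y,\alpha_d]$ and then the $j=0$ ones.
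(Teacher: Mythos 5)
Your part (1) is correct and is essentially the paper's argument: the paper sends $x\mapsto ax$, $h\mapsto ah+c$, $y\mapsto y$ while you send $x\mapsto x$, $h\mapsto ah+c$, $y\mapsto ay$; both assignments satisfy the defining relations of $\mathcal{H}(f_2)$, and writing down the explicit inverse is a perfectly good substitute for the paper's appeal to Lemma~\ref{lemma-1} for injectivity. Your part (2), however, takes a genuinely different route from the paper, which first uses part (1) to reduce to $f=0$, where $hx=yh=0$ and $h=xy-z$ yield the alternative basis $\{x^iy^jz^k\}$, and then disposes of everything outside $\C[z]$ by a single leading-term computation of $[y,\alpha]$. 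Your grading-plus-coefficient-comparison scheme is viable, and your Case $d=0$ is correct; absorbing the surviving coefficient into $c_{N,0,N}z^N$ is a clean way to close that induction.

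Case $d\ne 0$ contains two genuine problems. First, the map $x\mapsto y$, $y\mapsto x$, $h\mapsto 2c-h$ is not an anti-automorphism: it sends $yx-xy-(c-h)$ to $yx-xy-c+(2c-h)=2(c-h)\ne 0$. (The anti-automorphism fixing $h$ does work, so this is repairable.) Second, and more seriously, the step ``the coefficients of $x^{d+k}h^jy^{k+1}$ for $j\ge 1$ force $c_{d+k,j,k}=0$'' is false as a standalone deduction. Writing $b_{k,j}:=c_{d+k,j,k}$, that coefficient equation reads $b_{k,j}=c\,b_{k+1,j}-b_{k+1,j-1}$; it couples level $k$ to level $k+1$ and mixes $j$ with $j-1$, and the family of these equations over $k\ge 1$, $j\ge 1$ admits the nonzero solution $b_{2,0}=1$, $b_{1,1}=-1$ (all other entries zero). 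So your proposed two-pass order --- first kill all $j\ge 1$ coefficients, then the $j=0$ ones --- cannot be carried out. The system does have only the trivial solution, but the deduction must run upward in $k$ anchored at the base level: $[h,\alpha_d]=0$ gives $b_{0,j}=0$ for all $j$, and then, assuming level $k$ vanishes entirely, the relations $0=c\,b_{k+1,j}-b_{k+1,j-1}$ for $j\ge 1$, applied at $j$ one past the top nonzero index, force all of level $k+1$ to vanish (its $j=0$ coefficient included) before one can pass to level $k+2$. As written, the order in which you unravel the linear constraints is wrong and the argument does not close; with the induction reorganized as above it does.
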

\begin{proof}(1). It is straightforward to verify the following equations in $\mathcal{H}(f_1)$
$$(ah+c)(ax)=(ax)(af_1(h)+c)=(ax)f_2(ah+c),$$
 $$y(ah+c)=(af_1(h)+c)y=f_2(ah+c)y,$$
 $$[(ax),y]=(ah+c)-(af_1(h)+c)=(ah+c)-f_2(ah+c).$$  Hence from Lemma 1 and the definition of $\mathcal{H}(f_2)$, there exists a unique algebra isomorphism $\sigma:\mathcal{H}(f_2)\rightarrow \mathcal{H}(f_1)$ with $\sigma(x)=ax,\sigma(h)=ah+c$ and $\sigma(y)=y$.

(2). From (1), we know that $\mathcal{H}(f)\cong \mathcal{H}(0)$ for any $f\in\C$. We only need to consider the case that $f=0$. In this case we have $z=xy-h=yx$, $hx=yh=0$. By induction on $j$ we deduce that  $$h^j=(xy-z)^j=(-1)^jz^{j-1}(xy-z),\,\,\forall \,\, j\in\N.$$ From Lemma 1 we see that $\{x^iy^jz^k\,\,|\,\,i,j,k\in\Z_+\}$ is a basis of $\mathcal{H}(0)$.
Let
 $$\a=x^ny^mg(z)+\sum_{(i,j)<(n,m)}x^iy^jg_{ij}(z)\in \mathcal{H}(f)\backslash \C[z].$$ Without lose of generality, we may assume that $n>0$. Then from $[y,x^ky^iz^j]=x^{k-1}y^iz^{j+1}-x^ky^{i+1}z^j$ for all $i,j\in \Z_+$ and $k\in \N$. By computing the terms $x^ny^{m+1}$ under the basis $\{x^iy^jz^k|i,j,k\in \Z_+\}$, we have $[y,\a]\ne 0$. Thus $\a\not\in Z(\mathcal{H}(f))$.
\end{proof}

Now we can determine the center of $\mathcal{H}(f)$.

\begin{theorem}\label{thm-4}  If $f(h)=wh+(1-w)c$ for some $l$-th primitive root $w$ of unity and $c\in \C$, then
$Z(\mathcal{H}(f))=\C[x^l,y^l,(h-c)^l,z]$. Otherwise, $Z(\mathcal{H}(f))=\C[z]$.
\end{theorem}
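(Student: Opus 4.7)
The plan is to split into two subcases, $f\in\C$ and $f\notin\C$, and in the latter run a leading-term-and-subtract induction based on the monomial basis of Lemma \ref{lemma-1} and the multiplicativity of $\lt$ in Lemma \ref{lemma-2}. Since the case $f\in\C$ is exactly Lemma \ref{lemma-3}(2), I assume henceforth that $f\notin\C$, and I take any $\alpha\in Z(\mathcal{H}(f))$ with $\lt(\alpha)=x^n g(h) y^m$, $g\ne 0$.

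For the \emph{non-special} regime, when $f$ is not of the form $wh+(1-w)c$ for any primitive root of unity $w$, I first apply $[h,\alpha]=0$. Since $h\cdot x^i h^j y^k = x^i f^{(i)}(h) h^j y^k$ and the mirror identity holds on the right, this forces $f^{(n)}=f^{(m)}$; a degree comparison (for $\deg f\geq 2$, $\deg f^{(i)}=(\deg f)^i$ is strictly increasing in $i$; for $\deg f=1$, writing $f=ah+b$, the non-special hypothesis leaves either $a$ not a root of unity or $a=1$ with $b\ne 0$) then yields $n=m$. Using $y\cdot p(h)=p(f(h))\cdot y$ and Lemma \ref{lemma-2}, the leading term of $[y,\alpha]$ works out to $x^n(g(f(h))-g(h))y^{m+1}$, so $g(f(h))=g(h)$, and the same case analysis forces $g$ to be a constant $c\in\C^*$. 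As $\lt(z^n)=x^n y^n$ and $c z^n\in Z$, the element $\alpha-cz^n$ is central with strictly smaller leading term; induction on $\lt$, with base case $\alpha\in\C[h]$ (where $[x,\alpha]=0$ reads $p(h)=p(f(h))$ and thus $p$ is constant), gives $Z(\mathcal{H}(f))=\C[z]$.

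For the \emph{special} regime $f(h)=wh+(1-w)c$ with $w$ a primitive $l$-th root of unity, Lemma \ref{lemma-3}(1) applied with the change of variable $h\mapsto h-c$ reduces the problem to $f(h)=wh$; afterwards replacing $h$ by $h-c$ in the computed center yields the stated $\C[x^l,y^l,(h-c)^l,z]$. The inclusion $\C[x^l,y^l,h^l,z]\subseteq Z(\mathcal{H}(wh))$ is a direct check: iterating $hx=wxh$ gives $hx^i=w^i x^i h$, so
$[y,x^l]=(w-1)\sum_{i=0}^{l-1}x^i h\,x^{l-1-i}=(w-1)x^{l-1}h\sum_{j=0}^{l-1}w^j=0$
by the geometric-sum identity for a primitive $l$-th root with $l>1$; the symmetric computation (using $y^i h=w^i h y^i$) gives $[x,y^l]=0$, and $w^l=1$ shows that $x^l, y^l$ commute with $h$ and that $h^l$ commutes with $x,y$.

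For the reverse inclusion I would repeat the leading-term analysis: $[h,\alpha]=0$ now only forces $w^n=w^m$, i.e.\ $l\mid(n-m)$, and $[y,\alpha]=0$ gives $g(wh)=g(h)$, hence $g=\tilde g(h^l)$ for some polynomial $\tilde g$. When $n\geq m$, write $n-m=lk$; Lemma \ref{lemma-2} applied to the central product $(x^l)^k\tilde g(h^l)z^m$ shows its leading term equals $x^n\tilde g(h^l)y^m$, the apparent twist $\tilde g(f^{(m)}(h)^l)$ collapsing via $f^{(m)}(h)^l=(w^m h)^l=h^l$. The case $n<m$ is symmetric using $(y^l)^{(m-n)/l}\tilde g(h^l)z^n$, and iterated subtraction descends to $\alpha\in\C[h]\cap Z$, where $p(h)=p(wh)$ forces $p\in\C[h^l]\subseteq\C[x^l,y^l,h^l,z]$. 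The main technical obstacle is precisely this leading-term matching in the special regime: it succeeds only because the congruence $l\mid(n-m)$ and the identity $w^{lm}=1$ cause the twist in Lemma \ref{lemma-2}'s product formula to collapse, so that $\tilde g$ reappears undistorted in the leading term of the central cancelling element.
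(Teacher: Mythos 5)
Your proposal is correct and follows essentially the same route as the paper: a normal-form/leading-term analysis in which $[\alpha,h]=0$ forces $f^{(n)}=f^{(m)}$ (hence $n=m$ or $l\mid n-m$), $[\alpha,y]=0$ forces $g(f(h))=g(h)$ (hence $g\in\C$ or $g\in\C[(h-c)^l]$), and subtraction of the central element $g\,x^{n-m}z^{m}$ (resp.\ $g\,y^{m-n}z^{n}$) strictly lowers the lexicographic degree. The only differences are presentational — you run it as a descent with the two regimes separated, and you spell out the inclusion $R(f)\subseteq Z(\mathcal{H}(f))$ and the collapse $f^{(m)}(h)^l=h^l$ that the paper leaves as ``straightforward to verify.''
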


\begin{proof} Let $R(f)=\C[x^l,y^l,(h-c)^l,z]$ if $f(h)=wh+(1-w)c$ for some $l$-th primitive
root $w$ of unity and $c\in \C$, and $R(f)=\C[z]$ otherwise. It is
straightforward to verify that $R(f)\subseteq Z(\mathcal{H}(f))$.
Now suppose that $R(f)\ne Z(\mathcal{H}(f))$. From Lemma
\ref{lemma-3} (2), we may assume that $f(h)\not\in \C$. Then it is
easy to see that $Z(\mathcal{H}(f))\cap \C[h]=\C[(h-c)^l]$ if
$f(h)=wh+(1-w)c$ for some $l$-th primitive root $w$ of unity and
$c\in \C$, and $Z(\mathcal{H}(f))\cap \C[h]=\C$ otherwise.
 Let
 $$\a=x^ng(h)y^m+\sum_{(i,j)<(n,m)}x^ig_{ij}(h)y^j\in Z(\mathcal{H}(f))\backslash R(f)$$ with the minimal $(n,m)$. Without lose of generality, we may assume that $n>0$ (similarly for $m>0$). From $$0=[\a,y]=x^{n}g(h)y^{m+1}-x^{n}g(f(h))y^{m+1}+\sum_{(i,j)<(n,m+1)}x^i \bar{g}_{i,j}(h)y^j,$$ we have $g(h)=g(f(h))$, yielding that $g(h)\in Z(\mathcal{H}(f))$. So $g(h)\in \C[(h-c)^l]$  if $f(h)=wh+(1-w)c$ for some $l$-th primitive root $w$ of unity and $c\in \C$, and $g\in \C$ otherwise.

 By computing $[\a,h]=0$, we have $f^{(n)}(h)=f^{(m)}(h)$, which implies $m-n\in \Z l$ if $f(h)=wh+(1-w)c$ for some $l$-th primitive root $w$ of unity and $c\in \C$, and $m=n>0$ otherwise.
Now it is easy to see that $\a-g x^{n-m}z^{m}\in Z(\mathcal{H}(f))\backslash R(f)$ if $n\ge m$ and $\a-g y^{m-n}z^{n}\in Z(\mathcal{H}(f))\backslash R(f)$ if $m>n$, which has a lower degree than $(n,m)$, a contradiction. Thus we have completed the proof.
\end{proof}

From Lemma 3 (1) we know that $\mathcal{H}(wh+(1-w)c)\cong \mathcal{H}(wh)$ for any $l$-th primitive root $w$ of unity and for all $c\in \C$.

Now we can determine  isomorphism classes for all generalized
Heisenberg algebras.

\begin{theorem}\label{thm-5}Let $f_1,f_2\in \C[h]$. Then $\mathcal{H}(f_1)\cong \mathcal{H}(f_2)$ if and only if one of the following holds:
\begin{itemize}\item[(1).] $f_1, f_2\in\C$;
\item[(2).] $f_1=f_2=h$;
\item[(3).] $f_1=h+c_1$ and $f_2=h+c_2$ for some $c_1,c_2\in \C^*$;
\item[(4).] $f_1=a_1h+b_1$ and $f_2=a_2h+b_2$, where $a_1,a_2,b_1,b_2\in \C$
with  $a_1=a_2^{\pm1}\ne 1$;
\item[(5).] $\deg f_1\ge 2$ and $f_2(h)=af_1(a^{-1}(h-c))+c$ for some $a\in
\C^*$ and $c\in \C$.
\end{itemize}
 \end{theorem}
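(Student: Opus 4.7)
The plan is to establish both directions, with the nontrivial work concentrated on the ``only if'' direction.

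For the ``if'' direction, almost every case reduces to Lemma~\ref{lemma-3}(1) with an explicit choice of parameters. Case (1) follows from Lemma~\ref{lemma-3}(2), whose proof in fact shows $\H(f)\cong\H(0)$ for every constant $f$; case (2) is trivial; case (3) applies Lemma~\ref{lemma-3}(1) with $a=c_2/c_1$, $c=0$; case (4) with $a_1=a_2$ applies it with $a=1$, $c=(b_1-b_2)/(a_2-1)$; and case (5) is exactly Lemma~\ref{lemma-3}(1) rewritten via the substitution $h\mapsto a^{-1}(h-c)$. The only piece not directly handled by Lemma~\ref{lemma-3}(1) is case (4) with $a_1=a_2^{-1}$; after reducing via Lemma~\ref{lemma-3}(1) to $f_1=a_1 h$ and $f_2=a_1^{-1}h$, I would construct the isomorphism by defining $\phi\colon\H(a_1^{-1}h)\to\H(a_1h)$ on generators by $x\mapsto y$, $y\mapsto a_1 x$, $h\mapsto h$ and verifying the three defining relations directly, using Lemma~\ref{lemma-1} to conclude that $\phi$ extends to an algebra isomorphism (the obvious formula providing the inverse).

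For the ``only if'' direction, let $\phi\colon \H(f_1)\to\H(f_2)$ be an isomorphism and set $x'=\phi(x)$, $y'=\phi(y)$, $h'=\phi(h)$. Two initial reductions isolate cases (1) and (2): by Lemma~\ref{lemma-2}(2), $\H(f_i)$ has zero divisors iff $f_i\in\C$, so $f_1\in\C$ iff $f_2\in\C$, giving (1); and $\H(f_i)$ is commutative iff $f_i=h$, giving (2). Assuming henceforth that both $f_i$ are non-constant and distinct from $h$, the core task is a rigidity statement about $\phi$. The plan is to apply the degree function $\deg$ and leading-term operator $\lt$ from Lemma~\ref{lemma-2}(1) to the three relations $h'x' = x' f_1(h')$, $y'h' = f_1(h') y'$, and $[y',x'] = f_1(h') - h'$ that must be satisfied by $x',y',h'$. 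Comparing leading terms should force $h'$ to be an affine function $\alpha h + \beta$ of $h$ (modulo central corrections permitted by Theorem~\ref{thm-4}), and $\{x',y'\}$ to be scalar multiples of $\{x,y\}$ or of $\{y,x\}$ in the $\Z$-grading given by $\deg x=1$, $\deg y=-1$, $\deg h=0$.

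Two sub-cases then emerge. In the non-swap sub-case, substituting back into $h'x' = x' f_1(h')$ yields the functional equation $f_1(\alpha h + \beta) = \alpha f_2(h) + \beta$, which is exactly case (5) when $\deg f_1\geq 2$ and recovers cases (3) and (4) with $a_1=a_2$ when $f_1$ is affine. In the swap sub-case, which is possible only when both $f_1, f_2$ are affine, the commutator relation forces $a_1=a_2^{-1}$, the remaining branch of case (4). The main obstacle is the rigidity step itself: ruling out the possibility that $\phi(h)$ is a higher-degree polynomial in $h$ or that $\phi(x), \phi(y)$ mix several graded pieces. This is especially delicate in the ``root of unity'' situation of Theorem~\ref{thm-4}, where the enlarged center $\C[x^\ell, y^\ell, (h-c)^\ell, z]$ allows genuine twists by central elements; careful bookkeeping with $\lt$ and $\deg$, applied separately to the leading term and the lower-order corrections of each generator, is what is needed to push the argument through.
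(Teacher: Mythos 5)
Your ``if'' direction follows the paper's route (Lemma~\ref{lemma-3}(1) for all cases except the swap in case (4)), and your degree-$\ge 2$ analysis of the ``only if'' direction is essentially the paper's Claim 4: show $\tau(h)\in\C[h]$ and hence affine, control $\tau(z)$ via Theorem~\ref{thm-4}, count degrees of $\tau(x),\tau(y)$, and split into the swap/non-swap sub-cases. Two things, one minor and one serious. Minor: your explicit swap map $x\mapsto y$, $y\mapsto a_1x$, $h\mapsto h$ does not satisfy the commutator relation; one needs $[\phi(y),\phi(x)]=a_1^{-1}h-h$, which forces $y\mapsto a_1^{-1}x$ (the paper instead uses $h\mapsto a_1h$, $x\mapsto y$, $y\mapsto x$). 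This is a fixable slip.

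The serious gap is the affine case $f_i=a_ih+b_i$. First, when $\deg f_1=1$ the leading-term comparison of $h'x'=x'f_1(h')$ is vacuous: $\deg(h'x')=\deg h'+\deg x'$ and $\deg(x'f_1(h'))=\deg x'+\deg h'$ automatically, so it gives no constraint on $\tau(h)$ at all; the paper instead pins down $\tau(h)=ah$ by using the invariantly defined ideal $I(f)$ generated by $h-f(h)$ (equivalently, the commutative quotient $\H(f)/I(f)\cong\C[X,Y,H]/(H-f(H))$), an idea absent from your plan. Second, and more importantly, your rigidity claim --- that $x',y'$ must be scalar multiples of $x,y$ or $y,x$ up to central corrections --- is exactly what fails when $a_1$ is a root of unity. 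After normalizing $\tau(h)=h$, the relation $h\tau(x)=a_1\tau(x)h$ in $\H(a_2h)$ with $a_1=w$, $a_2=w^p$ ($w$ a primitive $l$-th root of unity, $\gcd(p,l)=1$) only forces every basis monomial $x^ih^ky^j$ of $\tau(x)$ to satisfy $i-j\equiv p^{-1}\pmod l$; if $p\not\equiv\pm1\pmod l$ these monomials all have $|i-j|\ge 2$, so $\tau(x)$ is not of the form (central)$\cdot x$ or (central)$\cdot y$ even in leading term, and no amount of bookkeeping with $\lt$ and $\deg$ applied to the three relations will produce a contradiction from the relations alone. The contradiction has to come from surjectivity: the paper introduces $T(f)=\{a\in\C^*\mid \a h=ah\a \text{ for some }\a\ne0\}$ (which settles the non-root-of-unity case immediately via $\{a_1^i\}=\{a_2^i\}$) and, for the root-of-unity case, the eigenspaces $M_i=\{\a\mid \a h=w^{\pm1}h\a\}$, arguing that $M_1\cup\{h\}$ generates $\H(a_1h)$ while $M_2\cup\{h\}$ generates a proper subalgebra of $\H(a_2h)$ unless $w^p\in\{w,w^{-1}\}$. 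This generation/surjectivity argument is the missing idea in your proposal; without it the affine root-of-unity branch of case (4) is not proved.
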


 \begin{proof} If Condition (1), or (2), or (3),  or (4) with $a_1=a_2$, or (5)  holds, from Lemma \ref{lemma-3}
 (1) we know that $\mathcal{H}(f_1)\cong \mathcal{H}(f_2)$. Now
 suppose that $f_1=a_1h+b_1$ and $f_2=a_2h+b_2$ with $a_2=a_1^{-1}\ne1$ holds. From Lemma \ref{lemma-3}
 (1) again we know that $\mathcal{H}(a_ih+b_i)\cong \mathcal{H}(a_ih)$.
 For any $a\in \C^*$,  in $\mathcal{H}(ah)$ we have
 $$(ah) y=a^{-1} y(ah), x(ah)=a^{-1}(ah)x, [y,x]=(ah)-a^{-1}(ah).$$ From the definition
 of $\mathcal{H}(ah)$ and using Lemma 1,
 we have the associative algebra isomorphism $\tau:\mathcal{H}(a^{-1}h)\rightarrow \mathcal{H}(ah)$ with $\tau(h)=ah,\tau(x)=y,\tau(y)=x$.
Thus, if Condition (4) holds, we have again $\mathcal{H}(f_1)\cong
\mathcal{H}(f_2)$.

 Next suppose that $\tau:\mathcal{H}(f_1)\rightarrow \mathcal{H}(f_2)$ is an isomorphism. From Lemma \ref{lemma-2} (2), it is clear that $f_1\in \C$ if and only if $f_2\in \C$.
  Let $I(f)$ be the ideal of $\mathcal{H}(f)$ generated by $h-f(h)$.
  Note that $h-f(h)$ is a factor of $f^{(i)}(h)-f^{(i+1)}(h)$ for any
  $i\in\N$. We break
  the proof  into several claims.
First, using Lemma 1 we can easily verify the following properties
for $\mathcal{H}(f)$:

 \begin{claim} \begin{itemize}
\item[(1).]  $I(f)=\span\{x^ih^j(h-f(h))y^k|i,j,k\in \Z_+\}=\cap_{J\in \mathcal{J}}J$, where
$\mathcal{J}=\{$ideals $J$ of $\mathcal{H}(f)$ $|$
$\mathcal{H}(f)/J$ is commutative or $0 \}$.
\item[(2).]  The quotient $\mathcal{H}(f)/I(f)\cong \C[X,Y,H]/\C[X,Y,H](H-f(H))$, where $\C[X,Y,H]$ is
the polynomial algebra.\end{itemize}\end{claim}

 Thus, $\tau(I(f_1))=I(f_2)$ and $\mathcal{H}(f_1)/I(f_1)\cong \mathcal{H}(f_2)/I(f_2)$,
i.e., $$\frac{\C[X,Y,H]}{\C[X,Y,H](H-f_1(H))}\cong
\frac{\C[X,Y,H]}{\C[X,Y,H](H-f_2(H))}.$$ Consequently,

 \begin{claim} \begin{itemize}
\item[(1).]   $f_1\in \C$ if and only if $f_2\in \C$;
\item[(2).]   $f_1=h$ if and only if $f_2=h$;
\item[(3).]   $\deg f_1=1$ if and only if $\deg f_2=1$;
\item[(4).]  $f_1=h+c_1$ for some $c_1\in \C^*$ if and only if $f_2=h+c_2$ for some $c_2\in \C^*$.\end{itemize}
 \end{claim}

 \begin{claim}\label{claim-3} If $f_i=a_ih+b_i$ with $1\ne a_i\in \C^*$ and $b_i\in \C$ for $i=1,2$, then $a_1=a_2$ or $a_1=a_2^{-1}$.\end{claim}
 {\it Proof of Claim \ref{claim-3}}. From Lemma \ref{lemma-3} (1), we may
 assume that $b_1=b_2=0$.  From Lemma 1 and the fact  $\tau(I(a_1h))=I(a_2h)$, we
 can deduce that $\tau(h)=ah$ for some $a\in \C^*$. Using the automorphism of
 $\mathcal{H}(a_2h)$ with $h\mapsto ah, x\mapsto x, y\mapsto ay$,
 we may further assume that $\tau(h)=h$. Let $$T(f)=\{a\in \C^*| \a h=a h\a\,\,
 \mbox{for\,\, some\,\, nonzero}\,\, \a\in \mathcal{H}(f)\}.$$ Then
 $$\{a_1^i|i\in \Z\}=T(f_1)=T(f_2)=\{a_2^i|i\in \Z\}.$$ If $a_1$ is
 not a root of unity, then we have $a_1=a_2$ or $a_1=a_2^{-1}$. So we may
 assume that $a_1=w, a_2=w^p$ where $w$ is an $l$-th primitive root of unity with $l\ge 3$
 and $p,l$ coprime. Denote $$M_i=\{\a\in \mathcal{H}(a_ih)|\a h=w h\a\,\,\mbox{or}\,\,
 \a h=w^{-1} h\a\},i=1,2.$$ Then $\tau(M_1)=M_2$. Note that $x,y\in M_1$. Thus $M_1, h$
  generate $\mathcal{H}(f_1)$, from which we have $M_2=\tau(M_1), h=\tau(h)$ generates
  $\mathcal{H}(f_2)$. Using Lemma 1 it is straightforward to verify that
 $$M_2=\span\{x^ih^ky^j|i,j,k\in \Z_+, w^{p(j-i)}\in \{w,w^{-1}\}\}.$$
 Now suppose to the contrary that $w^{p}\not\in \{w,w^{-1}\}$. Then the
 subalgebra generated by $M_2, h$ is contained in the proper subalgebra
 $\C+\span\{x^ih^ky^j|i,j,k\in \Z_+, i+j\ge 2\}$ of $\mathcal{H}(f_2)$, a
 contradiction. So $a_2=a_1^{\pm1}$.

 \begin{claim} If $\deg f_1\ge 2$, then $f_2(h)=af_1(a^{-1}(h-c))+c$ for some $a\in \C^*, c\in \C$.\end{claim}

 {\it Proof of the Claim 4}. From Claim 2, we have $\deg f_2\ge 2$. From Lemma \ref{lemma-2},
 $\tau(h)\tau(x)=\tau(x)f_1(\tau(h))$, we have $\deg(\tau(h))=(0,0)$, i.e., $\tau(h)\in \C[h]$.
  Similarly we have $\tau^{-1}(h)\in \C[h]$. Thus $\tau(\C[h])=\C[h]$. So we have $\tau(h)=a^{-1}(h-c)$
  for some $a\in \C^*$ and $c\in \C$. Using the isomorphism $\sigma:\mathcal{H}(f_2)\rightarrow \mathcal{H}(f'_2)$, where $f_2(ah+c)=af'_2(h)+c$, in the proof of
  Lemma \ref{lemma-3} (1), we may assume that $\tau(h)=h$ with $f_2(h)$ being replaced by
  $f'_2(h)=a^{-1}(f_2(ah+c)-c)$.

  From Theorem \ref{thm-4}, we have $\tau(\C[z])=\C[z]$. Thus
   $\tau(z)=a'z+c'$ for some $a'\in \C^*$ and $c'\in \C$. So we have $\tau(xy-h)=a'(xy-h)+c'$, i.e.,
    $\tau(x)\tau(y)=a'xy+(1-a')h+c'$. Thus $$\deg \tau(x)+\deg \tau(y)=(1,1),$$ $$\deg \tau(x)\ne
    (0,0,
    \text{ and } \deg \tau(y)\ne(0,0).$$

   \begin{case}$\deg \tau(x)=(1,0)$ and $\deg \tau(y)=(0,1)$.\end{case}

   In this case, we may assume that $\tau(x)=xg_1(h)+g_2(h,y)$ and $\tau(y)=g_3(h)y+g_4(h)$.
   From $$a'xy+(1-a')h+c'=\tau(x)\tau(y)$$ $$= xg_1(h)g_3(h)y+
   xg_1(h)g_4(h)+g_2(h,y)g_3(h)y+g_2(h,y)g_4(h),$$
   using Lemma 1 it is easy to see that
 $g_1(h)g_3(h)=a'$, $g_2(h,y)=g_4(h)=0$, $a'=1,c'=0$ i.e.
 $$\tau(x)=b x, \tau(y)=b^{-1}y,$$ for some $b\in\C^*$. Therefore from $\tau(hx)=\tau(xf_1(h))=bxf_1(h)$ and
 $\tau(hx)=\tau(h)\tau(x)=hbx=bxf'_2(h)$, we have $f_1(h)=f'_2(h)$, i.e.,
 $f_2(h)=af_1(a^{-1}(h-c))+c$ for some $a\in \C^*,
c\in \C$.

  \begin{case}$\deg \tau(x)=(0,1)$ and $\deg \tau(y)=(1,0)$.\end{case}

    In this case, we may assume that $\tau(y)=xg_1(h)+g_2(h,y)$ and $\tau(x)=g_3(h)y+g_4(h)$.
    From $$a'(xy-h)+c'+f_1(h)=\tau(y)\tau(x)$$ $$= xg_1(h)g_3(h)y+ xg_1(h)g_4(h)+g_2(h,y)g_3(h)y+g_2(h,y)g_4(h),$$
    using Lemma 1
    it is easy to see that
 $g_1(h)g_3(h)=a'$, $g_2(h,y)=g_4(h)=0$, $-a'h+f_1(h)+c'=0$, which cannot occur since $\deg(f_1(h))\ge 2$.

 This completes the proof.
\end{proof}

\section{Fnite dimensional simple modules}

In this section, we will determine all finite dimensional simple
modules over $\H(f)$ for any polynomial $f(h)\in\C[h]$. For any
$f\in \C[h]$, define
$$S_f=\{\lambda:\Z\rightarrow \C| f(\lambda(i))=\lambda(i+1),\forall
i\in \Z\}.$$ It is clear that for any $\lambda\in S_f$, the set
$$\{m| \lambda(i)=\lambda(m+i),\forall i\in \Z\}$$ is an additive
subgroup of $\Z$. Denote by $|\lambda|$ the nonnegative integer with
$$\Z|\lambda|=\{m| \lambda(i)=\lambda(m+i),\forall i\in \Z\}.$$

The following result is crucial in this section.
\begin{lemma} Let $\lambda\in S_f$ with $|\lambda|=m\ne 0$. Then  $\lambda{(i)}=\lambda{(j)}$ for some $i,j\in\Z$
if and only if $m|i-j$.  \end{lemma}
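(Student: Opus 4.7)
The plan is to handle the two implications separately and to reduce the nontrivial direction to showing that the difference $d=j-i$ itself lies in the subgroup $\Z m$. The ``if'' direction is immediate from the definition of $|\lambda|=m$: if $m\mid j-i$ then $j=i+km$ for some $k\in\Z$, and by hypothesis the translation by $km$ fixes $\lambda$, so $\lambda(j)=\lambda(i)$.

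For the ``only if'' direction, assume $\lambda(i)=\lambda(j)$ and set $d=j-i$; by symmetry I may take $d\geq 0$. The goal is to prove that $d$ lies in $\{n\mid \lambda(i)=\lambda(n+i),\,\forall i\in\Z\}=\Z m$. First I would use the defining recursion $f(\lambda(k))=\lambda(k+1)$ to propagate the equality forward: a straightforward induction on $n\geq 0$ gives $\lambda(i+n)=f^{(n)}(\lambda(i))=f^{(n)}(\lambda(j))=\lambda(j+n)$, so that $\lambda(k)=\lambda(k+d)$ for every $k\geq i$. At this stage $\lambda$ is only ``eventually $d$-periodic''.

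The second step is to upgrade this to $d$-periodicity on all of $\Z$, which is the one place where the hypothesis $|\lambda|=m$ enters essentially. Given any $k\in\Z$, choose $N\in\N$ with $k+Nm\geq i$; then using the $m$-periodicity of $\lambda$ twice, together with the $d$-periodicity already established for large indices, I obtain $\lambda(k)=\lambda(k+Nm)=\lambda(k+Nm+d)=\lambda(k+d)$. Hence $d\in\Z m$, so $m\mid d$, as desired. The main (and only) obstacle is precisely this extension from forward-eventual periodicity to bilateral periodicity: the recursion $f(\lambda(k))=\lambda(k+1)$ only runs forward, so without exploiting the known period $m$ one cannot directly compare values of $\lambda$ at small or negative indices.
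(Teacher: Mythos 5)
Your proof is correct and uses essentially the same mechanism as the paper's: propagate the equality forward by iterating $f$ to get eventual $d$-periodicity, then use the known $m$-periodicity to shift an arbitrary index into the eventually-periodic range and back. The only difference is presentational — you argue directly that $d\in\Z m$, whereas the paper reduces $j-i$ modulo $m$ to a remainder $r$ with $0<r<m$ and derives a contradiction with $|\lambda|=m$ by the same shifting trick.
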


\begin{proof} If $m|i-j$, by definition we see that
$\lambda{(i)}=\lambda{(j)}$.

Suppose that $\lambda{(i)}=\lambda{(j)}$ for some $i,j\in\Z$  and $m
\nmid i-j$. We may assume that $i<j$ and $j-i=mq+r$ with $q\in\Z_+$
and $0<r<m$. Then we see that
$\lambda{(i+mq)}=\lambda{(i)}=\lambda{(j)}=\lambda{(i+mq+r)}$. Let
$i_1=i+mq$. We know that $\lambda{(i_1)}=\lambda{(i_1+r)}$,
consequently, $$\lambda{(k)}=\lambda{(k+r)}, \forall k\ge i_1.$$
Since $|\lambda|=m$, there exists $i_2\in\Z$ such that
$\lambda{(i_2)}\not=\lambda{(i_2+r)}$. There exists $q_1\in\N$ such
that $i_2+mq_1>i_1$. So $\lambda{(i_2+mq_1)}=\lambda{(i_2+mq_1+r)}$.
But   $\lambda{(i_2)}=\lambda{(i_2+mq_1+r)}$ and
$\lambda{(i_2+r)}=\lambda{(i_2+mq_1+r)}$, which contradict the
assumption that $\lambda{(i_2)}\not=\lambda{(i_2+r)}$. The lemma has
to be true.
\end{proof}

We remark that for $\lambda\in S_f$ with $|\lambda|=0$, it is
possible that $\lambda(i)=\lambda(j)$ for some different $i,
j\in\Z$.

 For any $\lambda\in S_f$ and $\dot{z}\in \C$, define
the action of $\mathcal{H}(f)$ on the vector space
$A_{\mathcal{H}(f)}(\lambda,\dot{z})=\C[t,t^{-1}]$ by
$$h t^i=\lambda(i)t^i, \,\,\,x t^i= t^{i+1},\,\,\, yt^i=(\lambda(i)+\dot{z})t^{i-1}.$$

Similarly we define the action of $\mathcal{H}(f)$ on
$B_{\mathcal{H}(f)}(\lambda,\dot{z})=\C[t,t^{-1}]$ by
$$h t^i=\lambda(i)t^i,\,\,\, x t^i=(\lambda(i+1)+\dot{z})t^{i+1},\,\,\,y t^i= t^{i-1}. $$

\begin{lemma} \begin{itemize}
\item[(1).] $A_{\mathcal{H}(f)}(\lambda,\dot{z})$ and  $B_{\mathcal{H}(f)}(\lambda,\dot{z})$ are $\mathcal{H}(f)$-modules.
\item[(2).] The $\mathcal{H}(f)$-module $A_{\mathcal{H}(f)}(\lambda,\dot{z})$ (resp.
$B_{\mathcal{H}(f)}(\lambda,\dot{z})$) is simple if and only if
$\lambda(i)+\dot{z}\ne 0$ for all $i\in \Z$ and $|\lambda|=0$.
\item[(3).] Suppose that $\lambda(i)+\dot{z}\ne 0$ for all $i\in \Z$ and
$m=|\lambda|\ne 0$. Then any nonzero submodule of
$A_{\mathcal{H}(f)}(\lambda,\dot{z})$ (resp.
$B_{\mathcal{H}(f)}(\lambda,\dot{z})$) is equal to $\C[t,t^{-1}]
g(t)$ for some weight vector $g(t)\in \C[t^{m}]$. In particular, any
maximal nonzero submodule of $A_{\mathcal{H}(f)}(\lambda,\dot{z})$
is equal to $\C[t,t^{-1}](t^m-a)$ for some $a\in \C^*$.\end{itemize}
  \end{lemma}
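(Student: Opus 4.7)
The proof splits into three parts.

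For Part~(1), verify the defining relations $hx=xf(h)$, $yh=f(h)y$, $[y,x]=f(h)-h$ on the basis $\{t^i\}$ using $f(\lambda(i))=\lambda(i+1)$; for example in $A_{\H(f)}(\lambda,\dot{z})$ we have $hx\cdot t^i=\lambda(i+1)t^{i+1}=f(\lambda(i))t^{i+1}=xf(h)\cdot t^i$, and the other identities (and the computations for $B$) are equally routine.

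For Part~(2), the easy direction is by exhibiting explicit proper submodules. If $\lambda(i_0)+\dot{z}=0$, then $yt^{i_0}=0$ in $A$, so $\span\{t^j:j\ge i_0\}$ is a proper submodule; if $|\lambda|=m\ne 0$, then $\C[t,t^{-1}](t^m-1)$ is a proper submodule of $A$, as one checks directly using $\lambda(k+m)=\lambda(k)$. The constructions for $B$ are analogous. For the converse, assume both conditions and let $W\subseteq A$ be nonzero. Pick $0\ne w\in W$ whose support $I=\{i_1<\cdots<i_r\}$ has minimal size $r$. If $r=1$, then $t^{i_1}\in W$, and since $x$ and $y$ act by shifting the index by $\pm 1$ with nonzero scalar factors, iteration gives $W=A$. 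If $r\ge 2$ and the weights $\lambda(i_s)$ are not all equal, then $(h-\lambda(i_r))w$ has strictly smaller support, contradicting minimality. Finally, if all $\lambda(i_s)=\mu$, set $m:=i_2-i_1\ne 0$. There exists $k\le i_1$ with $\lambda(k)\ne\lambda(k+m)$ (otherwise the equality $\lambda(k)=\lambda(k+m)$ for all $k\le i_1$ would propagate forward via $f(\lambda(k))=\lambda(k+1)$ to hold for all $k\in\Z$, contradicting $|\lambda|=0$). Then $y^{i_1-k}w$ has the same support size but non-constant weights on its support, reducing to the previous case. The argument for $B$ is symmetric with the roles of $x$ and $y$ interchanged.

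For Part~(3), the preceding lemma yields the $h$-weight-space decomposition $A=\bigoplus_{j=0}^{m-1}A_{\lambda(j)}$ with $A_{\lambda(j)}=t^j\C[t^m,t^{-m}]$. Any submodule $W$ is $h$-stable, so $W=\bigoplus_j W_{\lambda(j)}$. Since $x:A_{\lambda(j)}\to A_{\lambda(j+1)}$ acts as $t^i\mapsto t^{i+1}$ and $y:A_{\lambda(j)}\to A_{\lambda(j-1)}$ acts as $t^i\mapsto(\lambda(j)+\dot{z})t^{i-1}$, both being bijections, the stability conditions $xW_{\lambda(j)}\subseteq W_{\lambda(j+1)}$ and $yW_{\lambda(j+1)}\subseteq W_{\lambda(j)}$ combine to give $W_{\lambda(j+1)}=tW_{\lambda(j)}$. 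In particular $W_{\lambda(0)}=t^mW_{\lambda(0)}$, making $W_{\lambda(0)}$ a $\C[t^m,t^{-m}]$-submodule of the PID $\C[t^m,t^{-m}]$; hence $W_{\lambda(0)}=\C[t^m,t^{-m}]g(t^m)$ for some $g(t^m)\in\C[t^m]$, and therefore $W=\C[t,t^{-1}]g(t^m)$. For maximality, factor $g(T)=\prod(T-a_i)^{e_i}$ over $\C$; the proper submodules of $A$ strictly containing $W$ correspond to proper divisors of $g(T)$ in $\C[T]$, so $W$ maximal forces $g(T)=c(T-a)$, and $a\in\C^*$ since $a=0$ would make $g(t^m)$ a unit in $\C[t,t^{-1}]$ and $W=A$. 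The same argument gives the description for $B$.

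The main obstacle is the simplicity argument in Part~(2) when $\lambda$ has repeated values but $|\lambda|=0$: one must exploit the asymmetry of $f$ (which propagates equalities of $\lambda$ forward but not backward) to produce a shift $y^N$ that separates the weights on the support of $w$. The remainder is weight-space bookkeeping together with the fact that $\C[t^m,t^{-m}]$ is a principal ideal domain.
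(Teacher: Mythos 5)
Your proof is correct, and for the most part it follows the same route as the paper's. Parts (1) and the "easy" direction of (2) are identical in substance. For the converse in (2), the paper picks an element $g(t)=t^k\sum_{i=0}^s a_it^i$ of minimal width $s$, uses minimality to show that $g$ and all its $y$-shifts $t^{-i}g(t)$ are weight vectors, and derives $\lambda(i)=\lambda(i+s)$ for all $i$, contradicting $|\lambda|=0$; your variant — minimize the support size, kill non-constant weights with $h-\lambda(i_r)$, and otherwise use the forward propagation of $f$ (equalities $\lambda(k)=\lambda(k+m)$ propagate to larger $k$ but not smaller) to find a $y$-shift that separates the weights — is the same idea in a slightly different order, and you correctly identify the forward/backward asymmetry as the crux. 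In (3) the two arguments diverge a bit more: the paper again takes a minimal-width generator, shows it lies in $t^k\C[t^m]$ via Lemma 6, and concludes by polynomial division in $\C[t,t^{-1}]$ that it divides every element of the submodule, whereas you decompose $W=\oplus_{j}W_{\lambda(j)}$ into $h$-eigenspaces, derive $W_{\lambda(j+1)}=tW_{\lambda(j)}$ from the bijectivity of $x$ and $y$, and invoke the PID structure of $\C[t^m,t^{-m}]$. This is a clean repackaging of the same facts, and it makes the final classification of maximal submodules (divisors of $g(T)$, with $T-0$ excluded because $t^m$ is a unit) somewhat more transparent than the paper's brief remark.
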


\begin{proof} (1).  In $A_{\mathcal{H}(f)}(\lambda,\dot{z})$, we have
$$\aligned &hx t^i=ht^{i+1}=\lambda(i+1)t^{i+1},\\
&xf(h) t^i=xf(\lambda(i))t^i=\lambda(i+1)t^{i+1};\\
&yh t^i=y\lambda(i)t^i=\lambda(i)(\lambda(i)+\dot{z})t^{i-1},\\
&f(h)y t^i=f(h)(\lambda(i)+\dot{z})t^{i-1}=\lambda(i)(\lambda(i)+\dot{z})t^{i-1};\\
&(yx-xy)t^i=(\lambda(i+1)-\lambda(i))t^{i}=(f(h)-h)t^i.\endaligned$$
 Hence from the definition of $\mathcal{H}(f)$, $A_{\mathcal{H}(f)}(\lambda,\dot{z})$ is an $\mathcal{H}(f)$-module.

In $B_{\mathcal{H}(f)}(\lambda,\dot{z})$, we have
$$\aligned &hx t^i=(\lambda(i+1)+\dot{z})ht^{i+1}=(\lambda(i+1)+\dot{z})\lambda(i+1)t^{i+1},\\
&xf(h) t^i=xf(\lambda(i))t^i=\lambda(i+1)xt^{i}=(\lambda(i+1)+\dot{z})\lambda(i+1)t^{i+1};\\
&yh t^i=y\lambda(i)t^i=\lambda(i)t^{i-1},\\
&f(h)y t^i=f(h)t^{i-1}=f(\lambda(i-1))t^{i-1}=\lambda(i)t^{i-1};\\
&(yx-xy)t^i=(\lambda(i+1)-\lambda(i))t^{i}=(f(h)-h)t^i.\endaligned$$
 Hence $B_{\mathcal{H}(f)}(\lambda,\dot{z})$ is an $\mathcal{H}(f)$-module.

(2). It is easy to see that $t^i\C[t]$ is a submodule of
$A_{\mathcal{H}(f)}(\lambda,\dot{z})$ if $\lambda(i)+\dot{z}=0$, and
$(t^m-1)\C[t, t^{-1}]$ is a submodule of
$A_{\mathcal{H}(f)}(\lambda,\dot{z})$ if $|\lambda|=m\ne0$.

Now suppose that $|\lambda|=0$ and $\lambda(i)+\dot{z}\ne 0$ for
all $i\in \Z$. Let $V$ be any nonzero submodule of
$A=A_{\mathcal{H}(f)}(\lambda,\dot{z})$. Take $0\ne
g(t)=t^k\sum_{i=0}^s a_i t^i\in V$ with minimal $s$.  If $s=0$, then
$g(t)=a_0t^k$, which implies $\mathcal{H}(f) t^k=A$. Now suppose
that $s>0$.  From $hg(t)=\sum_{i=0}^s a_i\lambda(k+i) t^{k+i}$ and the minimality of $s$,
 we have $\lambda(k)=\lambda(k+i)$ for all $i$ with $a_i\ne 0$, i.e. $g(t)$ is a weight vector. It is clear that  $\C[t]g(t)=\C[x] g(t)\subset V$. Since $g(t)$ is a
weight vector, we have $0\ne yg(t)=(\lambda(k)+\dot{z})\sum_{i=0}^s
a_i t^{k-1+i}\in V$, i.e. $t^{-1}g(t)\in V$, which from the minimal
of $s$ is also a weight vector. Inductively we have $t^{-i}g(t)\in
V,\forall i\in \N$ and $t^{-i}g(t)$ are weight vectors for all $i\in
\Z$. In particular we have $\lambda(i)=\lambda(s+i),\forall i\in \Z$, a contradiction with $|\lambda|=0$. Thus $A_{\mathcal{H}(f)}(\lambda,\dot{z})$
is a simple $\mathcal{H}(f)$-module. The same arguments are valid
for $B_{\mathcal{H}(f)}(\lambda,\dot{z})$.

(3). Now suppose that $|\lambda|=m\ne0$ and $\lambda(i)+\dot{z}\ne
0$ for all $i\in \Z$. Let $V$ be any nonzero submodule of
$A=A_{\mathcal{H}(f)}(\lambda,\dot{z})$. Take $0\ne
g(t)=t^k\sum_{i=0}^s a_i t^i\in V$ with minimal $s$. If $s=0$, then
$g(t)=a_0t^k$, which implies $\mathcal{H}(f) t^k=A$. Now suppose
that $s>0$. From the same argument as in Part (2), we have $\C[t,t^{-1}]g(t)\subset V$ and $t^ig(t)$ are weight vectors for all $i\in \Z$.
In particular, we have $\lambda(j)=\lambda(j+i)$ for all $i$ with $a_i\ne 0$ and $j\in \Z$. Using Lemma 6 we deduce that
  $g(t)\in\C[t^m,t^{-m}]$.

And for any $0\ne v\in V$, from the minimality of
$s$, it is easy to see that $g(t)|v$ in the Laurient polynomial
algebra $\C[t,t^{-1}]$. Therefore $\C[t,t^{-1}]g(t)=V.$ Similarly we have  (3) for
$B_{\mathcal{H}(f)}(\lambda,\dot{z})$. \end{proof}

Now we are going to construct and  classify all finite dimensional
simple modules over $\mathcal{H}(f)$. For any $\lambda\in
S_f,\dot{z}\in \C,a\in \C^*$ with $|\lambda|=m\ne 0$, we have the
$m$-dimensional $\mathcal{H}(f)$-module
$$A'_{\mathcal{H}(f)}(\lambda,\dot{z},a)=A_{\mathcal{H}(f)}
(\lambda,\dot{z})/\C[t,t^{-1}](t^m-a);$$
$$B'_{\mathcal{H}(f)}(\lambda,\dot{z},a)=B_{\mathcal{H}(f)}(\lambda,\dot{z})/\C[t,t^{-1}](t^m-a).$$

\begin{lemma}\label{lemma-AB} The $\mathcal{H}(f)$-modules $A'_{\mathcal{H}(f)}(\lambda,\dot{z},a)$,
$B'_{\mathcal{H}(f)} (\lambda,\dot{z},a)$ are simple.\end{lemma}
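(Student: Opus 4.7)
The plan is to exhibit each quotient $A'_{\mathcal{H}(f)}(\lambda,\dot{z},a)$ and $B'_{\mathcal{H}(f)}(\lambda,\dot{z},a)$ as an $m$-dimensional module on which $h$ has $m$ pairwise distinct eigenvalues and on which $x$ (respectively $y$) cyclically permutes a weight basis up to nonzero scalars, so that any nonzero submodule must contain a single weight vector and, by cycling, must be the whole module.

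First I would verify that $\C[t,t^{-1}](t^m-a)$ is actually an $\mathcal{H}(f)$-submodule of $A_{\mathcal{H}(f)}(\lambda,\dot{z})$ (and similarly of $B$), so that the quotient is well-defined. Using the periodicity $\lambda(i+m)=\lambda(i)$ coming from $|\lambda|=m$, one gets for any $p(t)=\sum c_i t^i$ the identities
\[
h\bigl(p(t)(t^m-a)\bigr)=\Bigl(\sum c_i\lambda(i)t^i\Bigr)(t^m-a),\qquad x\bigl(p(t)(t^m-a)\bigr)=tp(t)(t^m-a),
\]
together with the analogous identity for $y$, each visibly divisible by $t^m-a$; the $B$-case is identical in spirit.

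Now working in $A'$ with the basis $\{1,t,\ldots,t^{m-1}\}$ induced by $t^m\equiv a$, the actions read $h\cdot t^i=\lambda(i)t^i$, $x\cdot t^i=t^{i+1}$ for $i<m-1$ with $x\cdot t^{m-1}=a\cdot 1$, and $y\cdot t^i=(\lambda(i)+\dot{z})t^{i-1}$ for $i>0$ with $y\cdot 1=a^{-1}(\lambda(0)+\dot{z})t^{m-1}$. Given a nonzero submodule $W$ and any $0\ne v=\sum_{i=0}^{m-1}c_i t^i\in W$, Lemma 6 ensures that $\lambda(0),\ldots,\lambda(m-1)$ are pairwise distinct, so a Vandermonde argument applied to $v,hv,\ldots,h^{m-1}v$ isolates each nonzero component $c_i t^i$ inside $W$. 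Thus some $t^i$ lies in $W$, and iterated application of $x$ (which sends $t^j$ to $t^{j+1}$ for $j<m-1$ and $t^{m-1}$ to the nonzero multiple $a\cdot 1$, since $a\in\C^*$) sweeps out the entire basis, so $W=A'$.

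The $B'$ argument is symmetric, with $y$ playing the role of cycling operator: since $y\cdot t^i=t^{i-1}$ in $B$ with $y\cdot 1=a^{-1}t^{m-1}$ in the quotient, there is no scalar obstruction at all, and the same Vandermonde-plus-cycling argument concludes. The only place where $a\ne 0$ is genuinely used is the wrap-around step, and the only place $|\lambda|=m$ is needed is to guarantee via Lemma 6 that $h$ separates the $m$ basis vectors. I do not foresee a serious obstacle; the routine care lies in bookkeeping the quotient action when exponents cross $m$ or $0$.
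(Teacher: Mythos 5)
Your proof is correct and follows essentially the same route as the paper's: using Lemma 6 to see that $h$ has $m$ distinct eigenvalues (so every nonzero submodule contains some $t^i$) and then cycling through the basis with $x$ in the $A'$ case and with $y$ in the $B'$ case. Your version is slightly more explicit about the Vandermonde step and about checking that $\C[t,t^{-1}](t^m-a)$ is a submodule, but the content is identical.
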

\begin{proof}Note that $A'_{\mathcal{H}(f)}(\lambda,\dot{z},a)$ is a weight module with the support set
 $\{\lambda(i)|i=0,1,2,\ldots,m-1\}$ and each nonzero weight space is of dimension 1. Hence any nonzero
 submodule must contain a weight vector $t^i+\C[t,t^{-1}](t^m-a)$ for some $i$, which by the repeated
 actions
 of $x$, can generate the whole module. So $\mathcal{H}(f)$-modules $A'_{\mathcal{H}(f)}(\lambda,\dot{z},a)$ is simple.
 The argument for $B'_{\mathcal{H}(f)} (\lambda,\dot{z},a)$ is similar.\end{proof}

\begin{lemma}\label{xV=V}Let $V$ be any simple module over $\mathcal{H}(f)$ with $\dim V=n$.
If $x^n V\ne 0$, then $V\cong
A'_{\mathcal{H}(f)}(\lambda,\dot{z},a)$ for some $\dot{z}\in \C,
a\in \C^*$ and $\lambda\in S_f$ with $|\lambda|=n$.\end{lemma}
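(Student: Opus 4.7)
The plan is to realize $V$ as a cyclic $x$-orbit starting from a carefully chosen $h$-eigenvector and then identify it with some $A'_{\mathcal{H}(f)}(\lambda,\dot{z},a)$. By Schur's lemma, the central element $z$ acts on $V$ as a scalar $\dot{z}\in\C$, so the relations collapse to $xy=h+\dot{z}$ and $yx=f(h)+\dot{z}$ on $V$.

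I first claim that $x$ acts invertibly on $V$. Consider the descending chain $V\supseteq xV\supseteq x^2V\supseteq\cdots\supseteq x^nV$. Since $\dim V=n$ and $x^nV\ne 0$, not every inclusion can be strict, so the chain stabilizes by index $n-1$; in particular $V_\infty:=x^{n-1}V=x^nV$. An easy induction from $[y,x]=f(h)-h$ gives $yx^n=x^ny+x^{n-1}(f^{(n)}(h)-h)$, hence $yV_\infty\subseteq x^nV+x^{n-1}V=V_\infty$. Together with the automatic $x$-invariance and the $h$-invariance from $hx^n=x^nf^{(n)}(h)$, the subspace $V_\infty$ is a nonzero $\mathcal{H}(f)$-submodule, so $V_\infty=V$ by simplicity and $x$ is bijective. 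This step is the main obstacle: a priori neither $\ker x$ nor $xV$ is visibly $\mathcal{H}(f)$-stable, and the trick is to work with the stable image $x^nV$, where the forced early stabilization absorbs the correction term $x^{n-1}(f^{(n)}(h)-h)V$ back into the submodule.

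Next I locate a good weight vector. Take any $h$-eigenvector $u$ with weight $\nu_0$; then $u,xu,\ldots,x^nu$ are nonzero $h$-eigenvectors with weights $f^{(i)}(\nu_0)$, so pigeonhole (the weight spaces support at most $n$ distinct weights) produces some $\eta=f^{(i)}(\nu_0)$ with $f^{(m)}(\eta)=\eta$, where $m\ge 1$ is the smallest such period. Since $x^m$ restricts to an invertible endomorphism of the nonzero weight space $V_\eta$, I choose an eigenvector $v_0\in V_\eta$ with $x^mv_0=av_0$ for some $a\in\C^*$. Setting $\mu_i=f^{(i)}(\eta)$ and $v_i=x^iv_0$, the weights $\mu_0,\ldots,\mu_{m-1}$ are distinct by minimality of $m$, so $v_0,\ldots,v_{m-1}$ are linearly independent.

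Let $W=\operatorname{span}(v_0,\ldots,v_{m-1})$. I verify $W$ is an $\mathcal{H}(f)$-submodule: $xv_{m-1}=x^mv_0=av_0\in W$ and the remaining $xv_i=v_{i+1}\in W$; $h$-invariance is automatic; and for $y$, $yv_i=(yx)v_{i-1}=(\mu_i+\dot{z})v_{i-1}\in W$ when $i\ge 1$, while $yv_0=(\mu_0+\dot{z})x^{-1}v_0=(\mu_0+\dot{z})a^{-1}v_{m-1}\in W$ using the identity $x^{-1}v_0=a^{-1}v_{m-1}$ that follows from $xv_{m-1}=av_0$. Simplicity of $V$ then gives $W=V$ and $m=n$. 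Finally, extending periodically by $\lambda(i)=\mu_{i\bmod n}$ produces $\lambda\in S_f$ with $|\lambda|=n$, and the assignment $t^i\mapsto v_i$ is directly checked against the defining actions to give the required isomorphism $V\cong A'_{\mathcal{H}(f)}(\lambda,\dot{z},a)$.
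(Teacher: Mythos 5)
Your proof is correct and follows essentially the same route as the paper's: locate an $h$-eigenvector $v$ with $x^mv=av$, $a\ne 0$, and $m$ the minimal period of the eigenvalue, then show $\operatorname{span}(v,xv,\dots,x^{m-1}v)$ is a submodule via $yW=(yx)W=(\dot{z}+f(h))W$ and identify it with $A'_{\mathcal{H}(f)}(\lambda,\dot{z},a)$. The only (harmless) difference is that you first upgrade $x^nV$ to a full $\mathcal{H}(f)$-submodule so that $x$ is bijective on all of $V$, whereas the paper only uses that $x$ acts bijectively on $W=x^nV$ without showing $W$ is a submodule.
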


\begin{proof}Let $W=\cap_{i=1}^{\infty} x^i V=x^n V$. Then $W\ne 0$  and $x$ acts bijectively on $W$.
It is easy to see that $h W\subset W$. Hence there exist some $0\ne
v\in W$ and $b\in \C$ with $h v=b v$. Then $h x^i v=x^if^{(i)}(h)
v=f^{(i)}(b) x^i v$.  Therefore there exists some $k\in \Z_+, m\in
\N$ such that $f^{(k)}(b)=f^{(k+m)}(b)$. We may assume that this $m$
is minimal. Replace $v$ by $x^k v$ and $b$ by $f^{(k)}(b)$. Then
$\C[x^m]v$ is contained in the eigenvector space of $h$ with respect
to the eigenvalue $b$. We may assume that $0\ne v\in W$ satisfies
 $h v=b v$, $x^m v=a v$ for some $b\in \C, a\in \C^*$ with $f^{(m)}(b)=b$.
 Let $W'=\sum_{i=0}^{m-1}\C x^i v$. Then $$x W'=W',\,\,\,hW'\subset W',\,\,\,yW'=(yx)W'=(\dot{z}+f(h))W'\subset
 W'.$$
 Thus $W'=V$ and $m=n$. And it is straightforward to verify that $V\cong A'_{\mathcal{H}(f)}(\lambda,\dot{z},a)$,
 where $\lambda(0)=b,\lambda(i)=f^{(i)}(b),$ for all $i\in \N$ and $\lambda(i)=\lambda(i+n)$ for all $i\in \Z$.
 \end{proof}

\begin{lemma}\label{yV=V}Let $V$ be any simple module over $\mathcal{H}(f)$ with $\dim V=n$. If $yV=V$ and $x^n V=0$,
then $V\cong B'_{\mathcal{H}(f)}(\lambda,\dot{z},a)$ for some $\dot{z}\in \C, a\in \C^*$, $|\lambda|=n$
and $\lambda(i)+\dot{z}=0$ for some $i$.\end{lemma}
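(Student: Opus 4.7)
The plan is to dualize the proof of Lemma \ref{xV=V} by swapping the roles of $x$ and $y$. The hypothesis $yV = V$ together with $\dim V = n < \infty$ forces $y$ to be bijective on $V$, so $y^{-1}$ will play here the role that $x$ played in the previous lemma. From $yh = f(h)y$ one immediately derives $hy^{-i} = y^{-i}f^{(i)}(h)$, so if $v$ is an $h$-eigenvector with $hv=bv$, then $u_i := y^{-i}v$ satisfies $hu_i = f^{(i)}(b)\,u_i$.

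First I would pick any $h$-eigenvector $v \in V$ and form the sequence $(u_i)_{i \geq 0}$. Since $V$ is finite dimensional, the sequence $\{f^{(i)}(b)\}$ is eventually periodic; choose $k\geq 0$ and minimal $m\geq 1$ with $f^{(k+m)}(b)=f^{(k)}(b)$, then replace $v$ by $u_k$ and $b$ by $f^{(k)}(b)$ so that the orbit becomes purely periodic of period $m$. Since $V$ is simple and finite dimensional, the central element $z=xy-h$ acts as a scalar $\dot z \in \C$ on $V$. I would then refine the choice of $v$ once more: the $b$-eigenspace $V_b$ of $h$ is stable under $y^{-m}$, so pick $v$ to be an eigenvector of $y^{-m}|_{V_b}$ with eigenvalue $a \in \C^*$ (nonzero because $y$ is invertible).

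Next I would show that $W' := \sum_{i=0}^{m-1}\C u_i$ is a submodule. Invariance under $h$ is immediate; for $y$, one has $y u_i = u_{i-1}$ for $i \geq 1$ and $y u_0 = a^{-1} u_{m-1}$ (coming from $y^{-m}v = av$); and rewriting $xy=\dot z + h$ gives $x u_{i-1} = (\dot z + f^{(i)}(b))u_i$ for $1 \leq i \leq m-1$ together with $x u_{m-1} = a(\dot z + b)\, u_0$. Simplicity of $V$ forces $W' = V$. By the minimality of $m$ and Lemma 6, the eigenvalues $\lambda(i) := f^{(i)}(b)$ for $0 \leq i < m$ are pairwise distinct, so the $u_i$ are linearly independent and $m = n$.

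Finally, extending $\lambda$ to all of $\Z$ by $m$-periodicity places $\lambda$ in $S_f$ with $|\lambda| = n$, and the linear map $u_i \mapsto t^i + \C[t,t^{-1}](t^m - a)$ matches the three actions, giving $V \cong B'_{\mathcal{H}(f)}(\lambda, \dot z, a)$. Applying $x^n V = 0$ to $u_0$ yields $a\prod_{i=1}^m(\lambda(i) + \dot z)\cdot u_0 = 0$, forcing $\lambda(i) + \dot z = 0$ for some $i$. The main obstacle is the bookkeeping in the third step: verifying that the wrap-around formulas $y u_0 = a^{-1} u_{m-1}$ and $x u_{m-1} = a(\dot z + b) u_0$ are precisely what match the defining relations of $B'_{\mathcal{H}(f)}(\lambda, \dot z, a)$; once this is settled the argument is essentially a direct translation of Lemma \ref{xV=V}.
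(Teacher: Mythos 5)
Your proposal is correct and follows essentially the same route as the paper: invert $y$, use $hy^{-i}=y^{-i}f^{(i)}(h)$ to locate a purely periodic $h$-eigenvalue orbit of minimal period $m$, normalize $y^{-m}v=av$, and show $\sum_{i=0}^{m-1}\C y^{-i}v$ is a submodule (the paper gets $x$-invariance via $xW=(xy)W=(\dot z+h)W$, which is the same computation as your explicit formulas). Your write-up is somewhat more detailed than the paper's — in particular the explicit wrap-around formulas, the use of Lemma 6 to get linear independence of the $u_i$ (hence $m=n$), and the product computation extracting $\lambda(i)+\dot z=0$ from $x^nV=0$ — but it is the same argument.
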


\begin{proof} Let $\sigma$ be the representation associated to $V$. Then $\sigma(y)$ is invertible.
And from $\sigma(y)\sigma(h)=\sigma(f(h))\sigma(y)$, we have
$\sigma(h)\sigma(y)^{-1}=\sigma(y)^{-1}\sigma(f(h))$. Since $V$ is
finite dimensional, there exist some $0\ne v\in V$ and $b\in \C$
with $h v=b v$. From $\sigma(h)\sigma(y)^{-i}
(v)=\sigma(y)^{-i}f^{(i)}(h)(v)=f^{(i)}(b)\sigma(y)^{-i}(v)$, we
have $f^{(k)}(b)=f^{(k+m)}(b)$ for some $k\in \Z_+, m\in \N$. We may
assume that this $m$ is minimal. Replace $v$ by $\sigma(y)^{-k} v$
and $b$ by $f^{(k)}(b)$. Then $\C[\sigma(y)^{-m}]v$ is contained in
the eigenvector space of $h$ with respect to the eigenvalue $b$. Now
we may assume that $0\ne v\in V$ satisfies $h v=b v$,
$\sigma(y)^{-m} v=a v$ for some $b\in \C, a\in \C^*$ with
$f^{(m)}(b)=b$. Let $$W=\sum_{i=0}^{m-1}\C y^i v=\sum_{i=0}^{m-1}\C
\sigma(y)^{-i} v.$$ Then $y W=W$, $hW\subset W$,
$xW=(xy)W=(\dot{z}+h)W\subset W$. Thus $W=V$ and $m=n$. And it is
straightforward to verify that $V\cong
B'_{\mathcal{H}(f)}(\lambda,\dot{z},a)$, where
$\lambda(0)=b,\lambda(i)=f^{(i)}(b),\forall i\in \N$ and
$\lambda(i)=\lambda(i+n)$ for all $i\in \Z$. And from $x^n V=0$, we
have $\lambda(i)+\dot{z}=0$ for some $i$.
\end{proof}
%Suppose that $yv=0$ for some $0\ne v\in V$ and $x^n V=0$. Then $hv=-\dot{z}v$. And from Lemma 1 $V=\C[x]v=\sum_{i=0}^{n-1} x^i v$. $hx^iv=x^if^{(i)}(-\dot{z})v=f^{(i)}(-\dot{z})x^i v$. $yx^i v=(yx)x^{i-1}v=(f(h)+\dot{z})x^{i-1}v=(\dot{z}+(f^{(i)})(-\dot{z}))x^{i-1}v$.

For any $n\in \N$ and $\dot{z}\in \C$ with
$\dot{z}+f^{(n)}(-\dot{z})=0$, define the action of $\mathcal{H}(f)$
on $C_{\mathcal{H}(f)}(\dot{z},n)=\C[t]/(t^n)$ by
$$ y t^i=(\dot{z}+f^{(i)}(-\dot{z}))t^{i-1},\forall i=0,1,\ldots,n-1,$$
$$h t^i=f^{(i)}(-\dot{z})t^i, \,\,\,x t^i=t^{i+1},\forall i=0,1,\ldots,n-1.$$

\begin{lemma}\label{lemma-C} \begin{itemize}\item[(1).] For any $n\in \N$ and $\dot{z}\in \C$
with $\dot{z}+f^{(n)}(-\dot{z})=0$, $C_{\mathcal{H}(f)}(\dot{z},n)$
is an $\mathcal{H}(f)$-module.
\item[(2).] The $\mathcal{H}(f)$-module $C_{\mathcal{H}(f)}(\dot{z},n)$ is simple if and only if
$\dot{z}+f^{(i)}(-\dot{z})$ $\ne 0,$ for all
$i=1,2,\ldots,n-1$.\end{itemize}\end{lemma}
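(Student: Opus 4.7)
Part~(1) is a direct verification on the basis $\{1, t, \ldots, t^{n-1}\}$ of $C_{\mathcal{H}(f)}(\dot{z},n)$, adopting the convention $f^{(0)}(h) = h$ so that the $i=0$ case of the $y$-action reads $y\cdot 1 = (\dot{z} + f^{(0)}(-\dot{z}))t^{-1} = 0$ consistently. For each $0\leq i\leq n-1$ I would evaluate both sides of the three defining relations of $\mathcal{H}(f)$ on $t^i$. The relations $hx = xf(h)$ and $yh = f(h)y$ follow immediately from $f(f^{(i)}(-\dot{z})) = f^{(i+1)}(-\dot{z})$. The bracket relation $yx - xy = f(h) - h$ holds by a direct computation for $0 \leq i \leq n-2$. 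The only delicate case is $i = n-1$: here $yx\cdot t^{n-1} = y\cdot 0 = 0$ since $xt^{n-1} = t^n \equiv 0$, while $xy\cdot t^{n-1} = (\dot{z} + f^{(n-1)}(-\dot{z}))t^{n-1}$, and matching this with $(f(h)-h)t^{n-1} = (f^{(n)}(-\dot{z}) - f^{(n-1)}(-\dot{z}))t^{n-1}$ is exactly equivalent to $\dot{z} + f^{(n)}(-\dot{z}) = 0$. Thus the hypothesis is precisely what is needed for the truncation by $(t^n)$ to be compatible with the action.

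For the sufficiency direction of (2), assume $\dot{z} + f^{(i)}(-\dot{z})\ne 0$ for all $i = 1, \ldots, n-1$. Let $W$ be a nonzero submodule and pick $0 \ne w = \sum_{i=0}^{k} a_i t^i \in W$ with $k$ maximal subject to $a_k\ne 0$. Since $y$ lowers the $t$-degree by one and annihilates $1$, one has $y^k t^i = 0$ for $i < k$ and $y^k t^k = \prod_{s=1}^{k}(\dot{z} + f^{(s)}(-\dot{z}))\cdot 1$. Under the hypothesis (and $k\leq n-1$), this product is nonzero, so $y^k w$ is a nonzero scalar multiple of $1$. Then $x^j\cdot 1 = t^j$ for $j<n$ gives $W = C_{\mathcal{H}(f)}(\dot{z},n)$, proving simplicity.

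For the converse, suppose $\dot{z} + f^{(m)}(-\dot{z}) = 0$ for some $m\in\{1,\ldots,n-1\}$. I would exhibit $W = \span\{t^m, t^{m+1}, \ldots, t^{n-1}\}$ as a proper nonzero submodule: each $t^j$ with $j\geq m$ is an $h$-eigenvector, $xt^j = t^{j+1}$ (or $0$ when $j=n-1$), $yt^m = 0\in W$, and $yt^j = (\dot{z} + f^{(j)}(-\dot{z}))t^{j-1}\in W$ for $j>m$ since $j-1\geq m$. Because $m\geq 1$ forces $1\notin W$, the submodule $W$ is proper, so $C_{\mathcal{H}(f)}(\dot{z},n)$ is not simple. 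The only non-mechanical ingredient in the whole proof is recognizing the exact role of the hypothesis $\dot{z} + f^{(n)}(-\dot{z}) = 0$ in making the bracket relation compatible with $t^n\equiv 0$; once that is in hand, both directions of (2) reduce to bookkeeping on the $y$-action.
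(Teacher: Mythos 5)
Your proposal is correct and follows essentially the same route as the paper: a direct check of the three defining relations for part (1), with the hypothesis $\dot{z}+f^{(n)}(-\dot{z})=0$ used exactly at $i=n-1$, and the same proper submodule $\span\{t^m,\ldots,t^{n-1}\}$ for the ``only if'' direction of (2). Your $y^k$-then-$x^j$ argument merely fills in the ``if'' direction that the paper dismisses as following ``easily from the action of $x$ and $y$.''
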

\begin{proof} (1). It is straightforward to compute out that

 $$\aligned &hx t^i=h t^{i+1}=(f^{i+1})(-\dot{z})t^{i+1}, xf(h)t^i=(f^{i+1})(-\dot{z})t^{i+1};\\
 &yh t^i=yf^{(i)}(-\dot{z}) t^i=f^{(i)}(-\dot{z})(\dot{z}+f^{(i)}(-\dot{z}))t^{i-1},\\
 & f(h)yt^i=f^{(i)}(-\dot{z})(\dot{z}+f^{(i)}(-\dot{z}))t^{i-1};\\
 &yx t^i=(\dot{z}+f^{(i+1)}(-\dot{z}))t^{i},i=0,\ldots,n-2,\\
 &xy t^i=(\dot{z}+f^{(i)}(-\dot{z}))t^i,i=0,\ldots,n-1.\endaligned$$
Hence $(yx-xy)t^i=(f(h)-h)t^i,\forall i=0,1,\ldots,n-2$. From
$\dot{z}+f^{(n)}(-\dot{z})=0$, and
$(yx-xy)t^{n-1}=-(\dot{z}+(f^{n-1})(-\dot{z}))t^{n-1}$, we have
$$(f(h)-h)t^{n-1}=f^{(n)}(-\dot{z})-f^{(n-1)}(-\dot{z})
t^{n-1}=(yx-xy)t^{n-1}.$$ Thus from the definition of
$\mathcal{H}(f)$, $C_{\mathcal{H}(f)}(\dot{z},n)$ is an
$\mathcal{H}(f)$-module.

(2). The ``if part" follows easily from the action of $x$ and $y$.
If $\dot{z}+f^{(i)}(-\dot{z})=0$ for some $i\in \{1,2,\ldots,n-1\}$. Then
it is straightforward to verify that
$\span\{t^i,t^{i+1},\ldots,t^{n-1}\}$ forms a proper submodule.
 \end{proof}

Now we are ready to give the classification of all
finite-dimensional simple modules over $\mathcal{H}(f)$.

\begin{theorem} Any $n$-dimensional simple $\mathcal{H}(f)$-module $V$ is isomorphic to only
one of the following simple modules up to an index shift of
$\lambda$:
\begin{itemize}
\item[(1).]  $A'_{\mathcal{H}(f)}(\lambda,\dot{z},a)$ for some $\lambda\in
S_f,\dot{z}\in \C, a\in \C^*$ with $|\lambda|=n$;
\item[(2).]   $B'_{\mathcal{H}(f)}(\lambda,\dot{z},a)$ for some $\dot{z}\in \C, a\in \C^*$, $|\lambda|=n$ with $\lambda(i)+\dot{z}=0$ for some $i$;
\item[(3).]   $C_{\mathcal{H}(f)}(\dot{z},n)$ for some $\dot{z}\in \C$ with
$\dot{z}+f^{(n)}(-\dot{z})=0$ and $\dot{z}+f^{(i)}(-\dot{z})\ne
0,\forall i=1,2,\ldots,n-1$.\end{itemize}
\end{theorem}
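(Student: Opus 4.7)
The strategy is a case analysis based on the actions of $x$ and $y$ on $V$. Since $V$ is finite-dimensional simple and $z=xy-h$ is central in $\H(f)$, Schur's lemma forces $z$ to act on $V$ as a scalar $\dot z\in\C$. I split into three exhaustive cases according to the pair $(x^nV,\,yV)$. If $x^nV\neq 0$, Lemma~\ref{xV=V} directly yields $V\cong A'_{\H(f)}(\lambda,\dot z,a)$. If $x^nV=0$ and $yV=V$, Lemma~\ref{yV=V} directly yields $V\cong B'_{\H(f)}(\lambda,\dot z,a)$ with $\lambda(i)+\dot z=0$ for some $i$. The remaining case $x^nV=0$ and $yV\neq V$ is not covered by the preceding lemmas, and is where the main work lies.

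In that case $\ker y\neq 0$, and the relation $yh=f(h)y$ shows $\ker y$ is $h$-stable, so I pick an $h$-eigenvector $0\neq v\in\ker y$ with $hv=bv$. Applying $xy=z+h$ to $v$ gives $0=xyv=(b+\dot z)v$, hence $b=-\dot z$. Iterating $hx=xf(h)$ yields $hx^iv=f^{(i)}(-\dot z)x^iv$, and iterating $yx=z+f(h)$ yields $yx^iv=(\dot z+f^{(i)}(-\dot z))x^{i-1}v$ for $i\geq 1$, while $yv=0$. Hence $\sum_i\C x^iv$ is a nonzero submodule of $V$, which by simplicity equals $V$. Letting $k$ be minimal with $x^kv=0$, the spanning set $\{v,xv,\ldots,x^{k-1}v\}$ forces $k\geq \dim V=n$; combined with $x^nV=0$ giving $k\leq n$, this yields $k=n$ and a basis of $V$. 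Applying the $y$-formula to the relation $x^nv=0$ forces $\dot z+f^{(n)}(-\dot z)=0$, and the linear map $x^iv\mapsto t^i$ is then readily verified, against the three module actions computed above, to be an $\H(f)$-module isomorphism $V\xrightarrow{\sim} C_{\H(f)}(\dot z,n)$. Lemma~\ref{lemma-C}(2) combined with the simplicity of $V$ then forces $\dot z+f^{(i)}(-\dot z)\neq 0$ for $1\leq i\leq n-1$.

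Finally, these three families are pairwise disjoint since $x$ acts bijectively on $A'_{\H(f)}(\lambda,\dot z,a)$, while $x$ is nilpotent and $y$ is bijective on $B'_{\H(f)}(\lambda,\dot z,a)$, and both $x,y$ are nilpotent on $C_{\H(f)}(\dot z,n)$. The main obstacle is the third case: the key observation is that the central scalar $\dot z$, via the identity $xy=z+h$, pins down the $h$-eigenvalue of a generator of $\ker y$ as $-\dot z$, after which the spanning argument forces the nilpotency index of $x$ on $v$ to equal $n$ exactly, matching the defining structure of $C_{\H(f)}(\dot z,n)$.
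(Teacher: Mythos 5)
Your proof is correct and follows essentially the same route as the paper: reduce via Lemmas~\ref{xV=V} and~\ref{yV=V} to the case $x^nV=0$, $yV\ne V$, then take an $h$-eigenvector in $\ker y$, use the central scalar $\dot z$ to pin its eigenvalue at $-\dot z$, and identify $\C[x]v$ with $C_{\mathcal{H}(f)}(\dot z,n)$. You merely spell out in more detail the steps the paper compresses into ``it is straightforward to verify,'' including the counting argument giving the nilpotency index $n$ and the disjointness of the three families.
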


\begin{proof}From Lemma \ref{lemma-AB} and Lemma \ref{lemma-C}, we have the simplicity of the modules in (1)-(3).
From Lemma \ref{xV=V} and Lemma \ref{yV=V}, we only need to consider
the case $yV\ne V$ and $x^n V=0$. Note that $V_1=\{v\in V| yv=0\}$
is a nonzero $\C[h]$ submodule of $V$. We may choose a nonzero $v\in
V_1$ with $h v=b v$ and $y v=0$. From Lemma 1 we know that $V=\C[x]
v$. Suppose that $z$ acts as the scalar $\dot{z}$. Then $\dot{z}
v=(xy-h)v=-b v$. Thus $b=-\dot{z}$. Now it is straightforward to
verify $V\cong C_{\mathcal{H}(f)}(\dot{z},n)$. The rest of the
theorem is clear. So we have proved the theorem.
\end{proof}

\begin{example}(1). If $f=0$, $\lambda\in S_f$  with
$|\lambda|<\infty$, then $\lambda(i)=0$ for all $i\in\Z$.  Any
finite dimensional simple $\mathcal{H}(0)$ module must be isomorphic
to $\C v$ with the action $x v=a v, hv=0, yv=bv$ for some $a, b\in
\C$.

(2). Let $f=h+c$ for some $c\in \C^*$. Since $|\lambda|=\infty$ for
any $\lambda\in S_f$,  there is no finite dimensional simple
$\mathcal{H}(f)$ module.

(3).  Let $f=ah$, where $a\in \C^*$ is not a root of unity. If
$\lambda\in S_f$  with $|\lambda|<\infty$, then $\lambda(i)=0$ for
all $i\in\Z$. Any
 finite dimensional simple $\mathcal{H}(f)$ module must be
isomorphic to $\C v$ with the action $x v=a v, hv=0, yv=bv$ for some $a, b\in
\C$.

(4). Let $f=wh$, where $w$ is an $n$-th ($n>1$) primitive root of
unity. Then any finite dimensional simple $\mathcal{H}(f)$ module
must be isomorphic to one of the following modules

(a). $\C v$ with the action $x v=a v, hv=0, yv=bv$ for some $a, b\in
\C$.

(b). $A'(\lambda,\dot{z}, a)=\C[t]/(t^n-a)$ with $\lambda(i)=bw^i,
b\in \C^*,a\in \C^*$ and the action
$$h t^i=b w^it^i, x t^i= t^{i+1}, yt^i=(bw^i+\dot{z})t^{i-1}.$$

(c).  $B'(\lambda,\dot{z}, a)=\C[t]/(t^n-a)$ with $\lambda(i)=bw^i,
b\in \C^*,a\in \C^*, \dot{z}=-bw^j$ for some $j$ and the action
$$h t^i=b w^it^i, x t^i=b(w^{i+1}-w^j)t^{i+1},y t^i= t^{i-1}. $$

(d). $C(\dot{z},n)=\C[t]/t^n$ for some $\dot{z}\in \C^*$ with the
action
$$ y t^i=(1-w^i)\dot{z}t^{i-1}, h t^i=-w^i\dot{z}t^i, x t^i=t^{i+1},\forall i=0,1,\ldots,n-1.$$
From Lemma 4, we know that all irreducible modules over
$\mathcal{H}(wh)$ are finite dimensional. Thus we have the
classification of all irreducible modules over $\mathcal{H}(wh+c)$
for any $c\in\C$ and  for any $n$-th ($n>1$) primitive root $w$ of
unity.
 \end{example}

\begin{example}Let $m\in \N+1$ and $f=h^m$. Then any finite dimensional simple
$\mathcal{H}(f)$ module must be isomorphic to one of the following
modules

(a). $\C v$ with the action $x v=a v, hv=cv,  yv=bv$ for some $a, b,
c\in \C$ with $c^m=c$.

(b). $A'(\lambda,\dot{z}, a)=\C[t]/(t^n-a)$ with the action
$$h t^i= w^{m^i}t^i, x t^i= t^{i+1}, yt^i=(w^{m^i}+\dot{z})t^{i-1},\forall i=0,1,\ldots,n-1,$$ for some
$a\in \C^*,\dot{z}\in \C,  n\in \N+1$, and  $w$ a $({m^n-1})$-th root of unity such that $w^{m^i-1}\ne 1$ for all $i=1,2\ldots,n-1$.

(c).  $B'(\lambda,\dot{z}, a)=\C[t]/(t^n-a)$ with the action
$$h t^i=w^{m^i}t^i, x t^i=(w^{m^{i+1}}-w^{m^j})t^{i+1},y t^i= t^{i-1},\forall i=0,1,\ldots,n-1,$$ for some
$a\in \C^*,j\in \Z_+, n\in \N+1$, and  $w$ a $({m^n-1})$-th root of unity such that $w^{m^i-1}\ne 1$ for all $i=1,2\ldots,n-1$.

(d). $C(\dot{z},n)=\C[t]/t^n$ with $n\in \N+1$, where
$(-\dot{z})^{m^n-1}=1$  and $(-\dot{z})^{m^i-1}\ne 1$ for all
$i=1,2,\ldots,n-1$, and with the action
$$ y t^i=(\dot{z}+(-\dot{z})^{m^i})t^{i-1},h t^i=(-\dot{z})^{m^i}t^i, \,\,\,x t^i=t^{i+1},\forall i=0,1,\ldots,n-1.$$
 \end{example}

The following example shows  an interesting property of many
$\mathcal{H}(f)$.

\begin{example}  Let  $f=h^m$ with $m>1$. It is easy to see that, for any $n\in\N$,
there is a $c\in\C$ such that $f^{(n)}(c)=c$ and $f^{(i)}(c)\ne c$
for any $0<i<n$. Using Theorem 12 we know that  there are infinitely
many simple $n$-dimensional $\mathcal{H}(f)$-module. Then
$\mathcal{H}(f)$ has infinitely many ideals $I_n$  such that
$\mathcal{H}(f)/I_n\cong M_n(\C)$ for any $n\in\N$.

This is not always true. If we take $f(h)=h^2+2h-3/4$, then $f(h)-h=(h+3/2)(h-1/2)$ and $f^{(2)}(h)-h=(h+3/2)^3(h-1/2)$. Thus we do
not have $\lambda\in S_f$ with $|\lambda|=2$, i.e., $\mathcal{H}(f)$
does not have a simple $2$-dimensional module.
\end{example}

We conclude this paper by an open question on generalized Heisenberg
algebras  $\mathcal{H}(f)$: How to determine all simple weight
modules for $\mathcal{H}(f)$? How about all simple modules over
$\mathcal{H}(f)$?

\

\noindent
{\bf Acknowledgments.}
The research in this paper was carried out during the visit of the first author
to University of Waterloo and  to Wilfrid Laurier University.
K.Z. is partially supported by  NSF of China (Grant
11271109), NSERC and University Research Professor grant at Wilfrid Laurier University. R.L. is partially supported by NSF of China
(Grant 11371134) and Jiangsu Government Scholarship for Overseas Studies (JS-2013-313).
R.L.  would like to thank professors Wentang Kuo
and   Kaiming Zhao for sponsoring his visit, and University of Waterloo
for providing excellent working conditions.

\vskip 10pt

 \noindent \noindent R.L.: Department of Mathematics,
Soochow University,  Suzhou,  P. R. China; e-mail:
{rencail\symbol{64}amss.ac.cn}

\

\noindent K.Z.: Department of Mathematics, Wilfrid Laurier
University, Waterloo, Ontario, N2L 3C5, Canada; and College of
Mathematics and Information Science, Hebei Normal (Teachers)
University, Shijiazhuang 050016, Hebei, P. R. China; e-mail:
{kzhao\symbol{64}wlu.ca}
\end{document}